%% file: arxiv_main.tex
\pdfoutput=1

\documentclass[11pt]{article}

\usepackage[margin=1in]{geometry}

\usepackage[utf8]{inputenc}
\usepackage[T1]{fontenc}

\usepackage{amsmath,amsfonts,amssymb,amsthm,bm}
\newtheorem{theorem}{Theorem}[section]
\newtheorem{corollary}{Corollary}[theorem]
\newtheorem{lemma}[theorem]{Lemma}

\newtheorem{remark}[theorem]{Remark}
\newtheorem{assumption}[theorem]{Assumption}

\usepackage{graphicx}
\usepackage{caption}
\usepackage{booktabs}
\usepackage{multirow}
\usepackage{makecell}
\usepackage{diagbox}
\usepackage{wrapfig}
\usepackage[table]{xcolor}
\usepackage{algorithm}
\usepackage{algorithmic}

\usepackage{nicefrac}
\usepackage{microtype}
\usepackage{url}
\usepackage{xcolor}
\usepackage[hidelinks]{hyperref}

\input{macros}

\title{Dynamic MRI Reconstruction Via\\ Dual Deep Priors and Low-Rank Plus Sparse Modeling}

\author{
Yongliang Sun$^{1,*}$ \quad
Siddhant Gautam$^{1,*}$ \quad
Chaoyan Huang$^{1,2}$ \quad
Nicole Seiberlich$^{3,4}$ \\
Ismail Alkhouri$^{5,6,\dagger}$ \quad
Saiprasad Ravishankar$^{1,7,\dagger}$ \\
\\
$^{1}$Department of Computational Mathematics, Science, \& Engineering, Michigan State University \\
$^{2}$Department of Electrical \& Computer Engineering, University of Michigan \\
$^{3}$Department of Radiology, University of Michigan \\
$^{4}$Department of Biomedical Engineering, University of Michigan \\
$^{5}$X Computational Physics Division, Los Alamos National Laboratory \\
$^{6}$Michigan Institute for Computational Discovery \& Engineering, University of Michigan\\
$^{7}$Department of Biomedical Engineering, Michigan State University \\
$^{*}$Equal first authorship~~ $^{\dagger}$Equal last authorship \\
\\
\texttt{\{sunyongl,gautamsi,huang345,ravisha3\}@msu.edu} \\
\texttt{ismailal@umich.edu} \quad
\texttt{nse@med.umich.edu} 
}

\date{}

\begin{document}

\maketitle

\begin{abstract}
Dynamic MRI reconstruction from undersampled measurements is a challenging inverse problem that requires preserving both spatial reconstruction quality and temporal consistency across the frames of the cine series. While recent learning-based approaches achieve strong performance, they heavily rely on large training, mostly fully sampled, datasets, and may otherwise generalize poorly. In contrast, training-data-free methods such as deep image prior (DIP) adapt directly to individual scans but often fail to fully exploit temporal structure and are prone to overfitting. They are particularly attractive for dynamic MRI due to the limited large, public, high-quality datasets.
In this work, we propose a structured DIP framework for dynamic MRI reconstruction that explicitly models spatiotemporal correlations through a low-rank plus sparse (L+S) decomposition. Instead of directly reconstructing the cine image series, we parameterize the low-rank background and sparse dynamic components using two DIP untrained convolutional neural networks, jointly optimized using accelerated extrapolated ADMM (eADMM). This formulation combines the implicit regularization of DIP with the interpretability of classical L+S regularization. We provide a convergence analysis for the proposed eADMM algorithm in the presence of DIP-based nonconvex parameterizations. In particular, we establish a sufficient descent property and show that every cluster point of the generated sequence is a critical point of the associated Lyapunov function. Across various acceleration factors, our numerical results demonstrate that the proposed method consistently outperforms classical reconstruction and existing supervised and unsupervised MRI reconstruction techniques.
\end{abstract}

\input{sections/introduction}
\input{sections/Preliminaries}
\input{sections/methods}

\input{sections/experiments}

\input{sections/discussions}

\section*{Acknowledgments}
This work was supported in part by the
National Science Foundation (NSF) grant CCF-2212065, NSF CAREER Award CCF-2442240, and the National Institutes of Health (NIH) Grant R21-EB030762. 

\bibliographystyle{abbrv}
\bibliography{references}

\clearpage
\appendix
\input{sections/appendix}

\end{document}

%% file: macros.tex
\usepackage{amsmath, amssymb, amsfonts, bm, mathtools}
\usepackage{xspace}



%% file: sections/introduction.tex
\section{Introduction}
Magnetic resonance imaging (MRI) is widely used in cardiovascular imaging because it provides excellent soft-tissue contrast without ionizing radiation~\cite{liang2000principles, bernstein2004handbook}. 
However, MRI acquisition is inherently time-consuming because data are collected sequentially in $k$-space, motivating accelerated imaging techniques~\cite{tsao2010ultrafast, lustig2008compressed}. 
This limitation is especially important in dynamic MRI, where multiple image frames must be recovered over time, making acceleration essential for reducing scan time, motion sensitivity, and patient burden~\cite{tsao2003k, otazo2015low}.

Classical accelerated MRI reconstructs images from undersampled data using physical encoding models and structured priors, with compressed sensing exploiting sparsity in incomplete $k$-space measurements~\cite{lustig2007sparse, lustig2008compressed}. 
For dynamic MRI, temporal redundancy has been exploited through $k$-$t$ FOCUSS~\cite{jung2009k}, blind compressed sensing~\cite{lingala2013blind}, 
$k$-$t$ SLR~\cite{lingala2011accelerated}, and low-rank plus sparse ($L+S$) modeling~\cite{otazo2015low}. 
Although interpretable, these hand-crafted priors may be restrictive for complex spatiotemporal cardiac dynamics.

Deep learning methods address this limitation by learning data-driven spatiotemporal priors from fully sampled training data. 
Some of the supervised approaches include deep cascade networks~\cite{schlemper2017deep}, CRNN~\cite{qin2018convolutional}, 
low-rank-inspired networks such as SLR-Net~\cite{ke2021learned} and L+S-Net~\cite{huang2021deep}. 
Recent dual-domain recurrent methods such as CTFNet and MostNet further exploit complementary temporal representations with data-consistency updates for dynamic MRI reconstruction~\cite{qin2021complementary, gautam2026scan}.
Diffusion-based generative models have also been investigated for dynamic MRI reconstruction~\cite{guan2025zero}. 

However, these data-driven methods typically require large, high-quality training datasets and may generalize poorly under shifts in sampling pattern, scanner protocol, anatomy, or patient population~\cite{knoll2020deep}. 
This has motivated training-data free neural reconstruction methods, including implicit neural representation (INR) approaches for dynamic cardiac MRI~\cite{kunz2024implicit, huang2025subspace} and deep image prior (DIP) methods that optimize untrained networks directly for each scan~\cite{ulyanov2018deep}. 
Recent dynamic DIP methods further incorporate temporal information through multi-dynamic DIP~\cite{vornehm2025multi} and low-rank DIP formulations~\cite{hamilton2023low}.

To address these limitations, several works incorporate DIP into optimization frameworks such as ADMM~\cite{cascarano2021combining,cheng2023mcdip,wu2023vdip} to combine the spectral bias of DIP with structured regularizations. 
While effective in practice, the theoretical behavior of such DIP-based ADMM schemes remains largely unclear due to the non-convexity of neural parameterizations. Moreover, solving the DIP subproblems typically requires multiple inner optimization steps, leading to increased computational cost. 

In this work, we propose a structured DIP framework for dynamic MRI reconstruction that explicitly models spatiotemporal dependencies through a low-rank plus sparse decomposition. Instead of directly optimizing image sequences, we parameterize both the low-rank and sparse components using separate untrained neural networks. Motivated by advances in accelerated methods~\cite{wu2024extrapolated, huang2024edge} and to address the limitations of DIP-based ADMM discussed in the previous paragraph, we adopt an extrapolated ADMM (eADMM) algorithm to improve efficiency. Empirically, extrapolation leads to faster convergence, which is further supported by our theoretical analysis. Our contributions are summarized as follows:

\begin{itemize}
    
\item \textbf{L+S Dual DIP Parametrization \& ADMM-based Algorithm}: We propose a dual DIP network re-parameterization of the dynamic MRI reconstruction problem through a low-rank plus sparse decomposition, adopting specialized regularizations for spatiotemporal dependencies. We adopt the eADMM algorithm for jointly optimizing the two networks' parameters that explicitly separates slowly varying background dynamics from sparse temporal innovations. 

\item \textbf{eADMM Convergence Analysis}: 
Under mild conditions, we provide the first convergence analysis for an eADMM framework with DIP-based generators. We establish a Lyapunov-based sufficient descent property and show that every cluster point of the generated sequence is a critical point of the proposed objective. 

\item \textbf{Extensive Empirical Evaluation}: We evaluate our L+S DIP method across different acquisition settings on the public OCMR dataset, demonstrating consistent improvements over a classical compressed sensing method, two Implicit Neural Representation (INR) methods, and two related DIP approaches. Furthermore, we show competitive performance when compared with a strong supervised learning-based baseline, \textit{all without requiring any fully-sampled training data}. 

\end{itemize}

%% file: sections/Preliminaries.tex
\section{Preliminaries}

\subsection{Dynamic MRI reconstruction \& its low-rank plus sparse modeling}

In dynamic multi-coil MRI reconstruction, the goal is to recover a sequence of discretized MR images
$\{\mathbf{x}_t\}_{t=1}^{T}$ with $\mathbf{x}_t \in \mathbb{C}^{H \times W}$ denoting the image at temporal frame $t$, from the undersampled $k$-space measurements $\{\mathbf{y}_t\}_{t=1}^{T}$. Here, $T$ is the number of temporal frames and $(H,W)$ denotes the spatial image dimensions. For each frame, the measurement model is given by
\begin{equation}
    \mathbf{y}_t = \mathbf{A}\mathbf{x}_t + \mathbf{n}_t,
\end{equation}
where, \textcolor{black}{following the common practice in dynamic MRI \cite{gamper2008compressed}}, $\mathbf{A} = \mathbf{M}\mathbf{F}\mathbf{C}$ denotes the multi-coil MRI forward operator with $\mathbf{M}$, $\mathbf{F}$, $\mathbf{C}$ denoting the undersampling mask, the 2D spatial Fourier transform and the coil sensitivity matrix, respectively, and $\mathbf{n}_t$ denotes additive measurement noise (e.g., assumed sampled from the Gaussian distribution).
To exploit temporal correlations across frames, the dynamic image sequence is commonly represented using the Casorati matrix~\cite{maximon2016differential} as follows:
\begin{equation}
    \mathbf{X}
    =
    \left[
    \mathrm{vec}(\mathbf{x}_1),
    \mathrm{vec}(\mathbf{x}_2),
    \ldots,
    \mathrm{vec}(\mathbf{x}_T)
    \right]
    \in \mathbb{C}^{HW \times T},
\end{equation}
where each column corresponds to a vectorized image frame.
Dynamic cardiac MRI data can be acquired using different protocols, including breath-held cine imaging, free-breathing acquisitions, and gated acquisitions. In this work, we focus on cardiac cine MRI reconstruction, where each slice is represented as a temporal sequence of cardiac phases.

\paragraph{Low-rank plus sparse modeling.}
Dynamic MRI sequences often exhibit strong temporal redundancy and can be decomposed into a slowly varying background (low-rank) component and a sparse dynamic component, as in~\cite {otazo2015low}: $\mathbf{X} = \mathbf{L} + \mathbf{S}$.
Here, $\mathbf{L}$ captures the temporally correlated low-rank background structure, while $\mathbf{S}$ represents sparse innovations corresponding to dynamic changes across frames. A classical low-rank plus sparse reconstruction method for dynamic MRI~\cite{otazo2015low} solves
\begin{equation}
\min_{\mathbf{L},\mathbf{S}}
\frac{1}{2}
\sum_{t=1}^{T}
\left\|
\mathbf{y}_t
-
\mathbf{A}(\mathbf{l}_t+\mathbf{s}_t)
\right\|_2^2
+
\lambda_L \|\mathbf{L}\|_*
+
\lambda_S \|\mathbf{W}\mathbf{S}\|_1,
\end{equation} 
where $\mathbf{l}_t$ and $\mathbf{s}_t$ denote the $t^{\textrm{th}}$ frame corresponding to $\mathbf{L}$ and $\mathbf{S}$, respectively. Here, $\|\cdot\|_*$ and $\|\cdot\|_1$ denote the nuclear and the $\ell_1$ norm, respectively, $\mathbf{W}$ is a sparsifying transform, and $\lambda_L,\lambda_S \geq 0$ are regularization parameters.


\subsection{Deep Image Prior for Inverse Problems}

Deep Image Prior (DIP)~\cite{ulyanov2018deep} provides a training-data-free approach for solving inverse imaging problems by parameterizing the unknown image using an untrained convolutional neural network (CNN). The architecture of the CNN, such as U-Net~\cite{ronneberger2015u}, acts as an implicit prior that favors natural image structure during optimization. CNNs show a spectral bias toward learning low frequencies (where image energy is concentrated) more quickly compared to higher frequencies (so noise fitting happens slower).
This makes DIP attractive for medical imaging settings where fully sampled training data may be limited~\cite{10896571}.
For a static inverse problem with forward model $\mathbf{y} \approx \mathbf{A}\mathbf{x}$, DIP represents the unknown image as $\mathbf{x} = f_{\bm{\theta}}(\mathbf{z})$, where $\mathbf{z}$ is a fixed random input and $\bm{\theta}$ denotes the CNN parameters. The reconstruction is obtained by fitting the network output directly to the observed measurements:
\begin{equation}
    \min_{\bm{\theta}}
    \left\|
    \mathbf{y}
    -
    \mathbf{A} f_{\bm{\theta}}(\mathbf{z})
    \right\|_2^2.
\end{equation}

A main challenge of vanilla DIP is its tendency to overfit the degraded measurements or undesired components in the null space of the forward operator, especially in the presence of noise or severe undersampling~\cite{10896571}. Several approaches have therefore been proposed to improve DIP-based reconstruction in static imaging, including early stopping, explicit regularization, stochastic perturbations, and network re-parameterizations, as recently surveyed in~\cite{11480054}.

DIP has also been extended to dynamic MRI reconstruction, beginning with time-dependent DIP methods~\cite{yoo2021timedependent} and more recent approaches such as multi-dynamic DIP~\cite{vornehm2025multi} and low-rank DIP~\cite{hamilton2023low}. While these methods improve upon DIP methods for static inverse problems by incorporating temporal information, they do not explicitly combine DIP parameterization with the classical low-rank plus sparse decomposition. 
As a result, they may not fully separate slowly varying background structure from sparse dynamic innovations, which motivates the structured dual-DIP formulation proposed in this work.

%% file: sections/methods.tex
\section{Proposed L+S DIP for dynamic MRI}

\subsection{Dual DIP re-parameterization \& the extrapolated ADMM algorithm}
Since dynamic MR image sequences exhibit strong temporal redundancy, the Casorati matrix
$\mathbf{X}$ can be decomposed into the sum of a slowly varying low-rank component and a sparse dynamic
component. Specifically, we write
\[
\mathbf{X}=\mathbf{L}+\mathbf{S},
\]
where
\[
\mathbf{L}
=
[\operatorname{vec}(\mathbf{l}_1),\operatorname{vec}(\mathbf{l}_2),\ldots,
\operatorname{vec}(\mathbf{l}_T)]
\]
denotes the low-rank component capturing the temporally correlated background structure, and
\[
\mathbf{S}
=
[\operatorname{vec}(\mathbf{s}_1),\operatorname{vec}(\mathbf{s}_2),\ldots,
\operatorname{vec}(\mathbf{s}_T)]
\]
denotes the sparse component capturing dynamic innovations across frames. Equivalently, for each
temporal frame $t$, we have
\[
\mathbf{x}_t=\mathbf{l}_t+\mathbf{s}_t .
\]

Instead of directly optimizing $\mathbf{L}$ and $\mathbf{S}$, in this paper, we propose to re-parameterize 
them using untrained CNNs, following the 
Deep Image Prior (DIP) framework~\cite{ulyanov2017deep}. 
Specifically, we introduce generator networks
$\mathcal{H}_{\bm{\theta}_L}$ and  $\mathcal{H}_{\bm{\theta}_S}$ with 3D input/output encoding for the low-rank and sparse components, respectively.
The Casorati matrices for the low rank and the sparse components can thus be defined as
\begin{equation}
\mathbf{L}(\bm{\theta}_L,\mathbf{z}_L)
=
\mathrm{reshape}\!\left(
\mathcal{H}_{\bm{\theta}_L}(\mathbf{z}_L)
\right),
\qquad
\mathbf{S}(\bm{\theta}_S,\mathbf{z}_S)
=
\mathrm{reshape}\!\left(
\mathcal{H}_{\bm{\theta}_S}(\mathbf{z}_S)
\right).
\end{equation}
where $\bm{\theta}_L$ and $\bm{\theta}_S$ denote the network parameters, respectively, and 
$\mathbf{z}_{L}$ and $\mathbf{z}_{S}$ are the 3D latent vectors representing the input to the DIP networks.
Here, $\mathcal{H}_{\bm{\theta}_L}$ and $\mathcal{H}_{\bm{\theta}_S}$ denote the DIP networks producing 3D spatiotemporal outputs, which are reshaped into Casorati matrices $\mathbf{L}$ and $\mathbf{S}$, respectively.
With the dual DIP parameterization, the optimization problem becomes:
\begin{equation}\label{model}
\min_{\bm{\theta}_L,\bm{\theta}_S}
\frac{1}{2}  \|\mathbf{Y} - \mathbf{A}(\mathbf{L}(\bm{\theta}_L,\mathbf{z}_L) + \mathbf{S}(\bm{\theta}_S,\mathbf{z}_S))\|_2^2
+ \lambda_L \|\mathbf{L}(\bm{\theta}_L,\mathbf{z}_L)\|_*
+ \lambda_S \| \mathbf{W} \mathbf{S}(\bm{\theta}_S,\mathbf{z}_S)\|_1,
\end{equation}
where $\mathbf{W}$ is a sparsifying transform such as the discrete wavelet transform or the identity transform. Here 
$
\mathbf{Y}
=
\left[
\mathrm{vec}(\mathbf{y}_1),
\mathrm{vec}(\mathbf{y}_2),
\ldots,
\mathrm{vec}(\mathbf{y}_{T})
\right]$
denotes the measurement Casorati matrix, obtained by vectorizing the $k$-space measurements from each temporal frame and stacking them as columns.
In this paper, we propose to solve the proposed optimization problem in \eqref{model} by using the extrapolated alternating direction method of multipliers (eADMM) algorithm \cite{wu2024extrapolated}. To this end, the optimization problem from Eq.~\eqref{model} can be equivalently reformulated as follows:
\begin{equation}\label{eqn: constrained opt}
\begin{aligned}
\min_{\mathbf{V},\mathbf{U},\bm{\theta}_L,\bm{\theta}_S}~
& \frac12 \|\mathbf{Y} - \mathbf{A}(\mathbf{U} + \mathbf{V})\|_2^2
+ \lambda_L\|\mathbf V\|_*
+ \lambda_S\|\mathbf{W} \mathbf U\|_1 \\
\text{s.t.}\quad
& \mathbf V=\mathbf L(\bm{\theta}_L,\mathbf{z}_L),\qquad
\mathbf U=\mathbf S(\bm{\theta}_S,\mathbf{z}_S), 
\end{aligned}
\end{equation}
where the augmented Lagrange function is
\begin{equation}
\begin{aligned}
\mathcal{L}_{\rho_L,\rho_S}
(\mathbf{V},\mathbf{U},\bm{\theta}_L,\bm{\theta}_S;\mathbf{D}_L,\mathbf{D}_S)
=&\;
\frac{1}{2}\|\mathbf{Y}-\mathbf{A}(\mathbf{U} + \mathbf{V})\|_2^2
+ \lambda_L \|\mathbf{V}\|_*
+ \lambda_S \|\mathbf{W}\mathbf{U}\|_1 \\
&+ \frac{\rho_L}{2}\|\mathbf{V} - \mathbf{L}(\bm{\theta}_L,\mathbf{z}_L)+\mathbf{D}_L\|_F^2
- \frac{\rho_L}{2}\|\mathbf{D}_L\|_F^2 \\
&+ \frac{\rho_S}{2}\|\mathbf{U} - \mathbf{S}(\bm{\theta}_S,\mathbf{z}_S)+\mathbf{D}_S\|_F^2
- \frac{\rho_S}{2}\|\mathbf{D}_S\|_F^2\:.
\end{aligned}
\label{eq:aug_lag}
\end{equation}
%
Here, $\mathbf{D}_L$ and $\mathbf{D}_S$ are the dual variables, $\rho_L$ and $\rho_S$ are augmentation penalty parameters. Algorithm \ref{alg:eadmm} provides the detailed procedure. In Step~1, we initialize the optimization variables. 
Let $\mathbf{X}^{0}=[\mathbf{x}_1^0, \mathbf{x}_2^0, \ldots, \mathbf{x}_T^0]$, where $\mathbf{x}_t^0$ denotes the \(t\)-th frame of the adjoint reconstruction $\mathbf{X}^{0}$, and the superscript $0$ denotes the initial time step of the proposed eADMM algorithm. 
We compute \(\mathbf{X}^{0}=\mathbf{A}^{H}{\mathbf{Y}}\). 
The initial low-rank component \(\mathbf{L}^{0}\) is set to be the temporal mean image \(\bar{\mathbf{x}}^{0}=\frac{1}{T}\sum_{t=1}^{T}\mathbf{x}_t^0\) repeated over frames, and the initial sparse component is defined as \(\mathbf{S}^{0}=\mathbf{X}^{0}-\mathbf{L}^{0}\). 
We then set \(\mathbf{V}^{0}=\mathbf{L}^{0}\), \(\mathbf{U}^{0}=\mathbf{S}^{0}\), \(\mathbf{D}_{L}^{0}=\mathbf{0}\), and \(\mathbf{D}_{S}^{0}=\mathbf{0}\). 
The network parameters \(\bm{\theta}_{L}^{0}\) and \(\bm{\theta}_{S}^{0}\) are randomly initialized, while the latent inputs are initialized as \(\mathbf{z}_{L}^{0},\mathbf{z}_{S}^{0}\sim\mathcal{N}(\mathbf{0}, \sigma_z^2\mathbf{I})\). 
Here, $\sigma_z$ denotes the standard deviation of the latent initialization. 

\begin{figure}[ht]
    \centering
\includegraphics[width=0.9\linewidth]{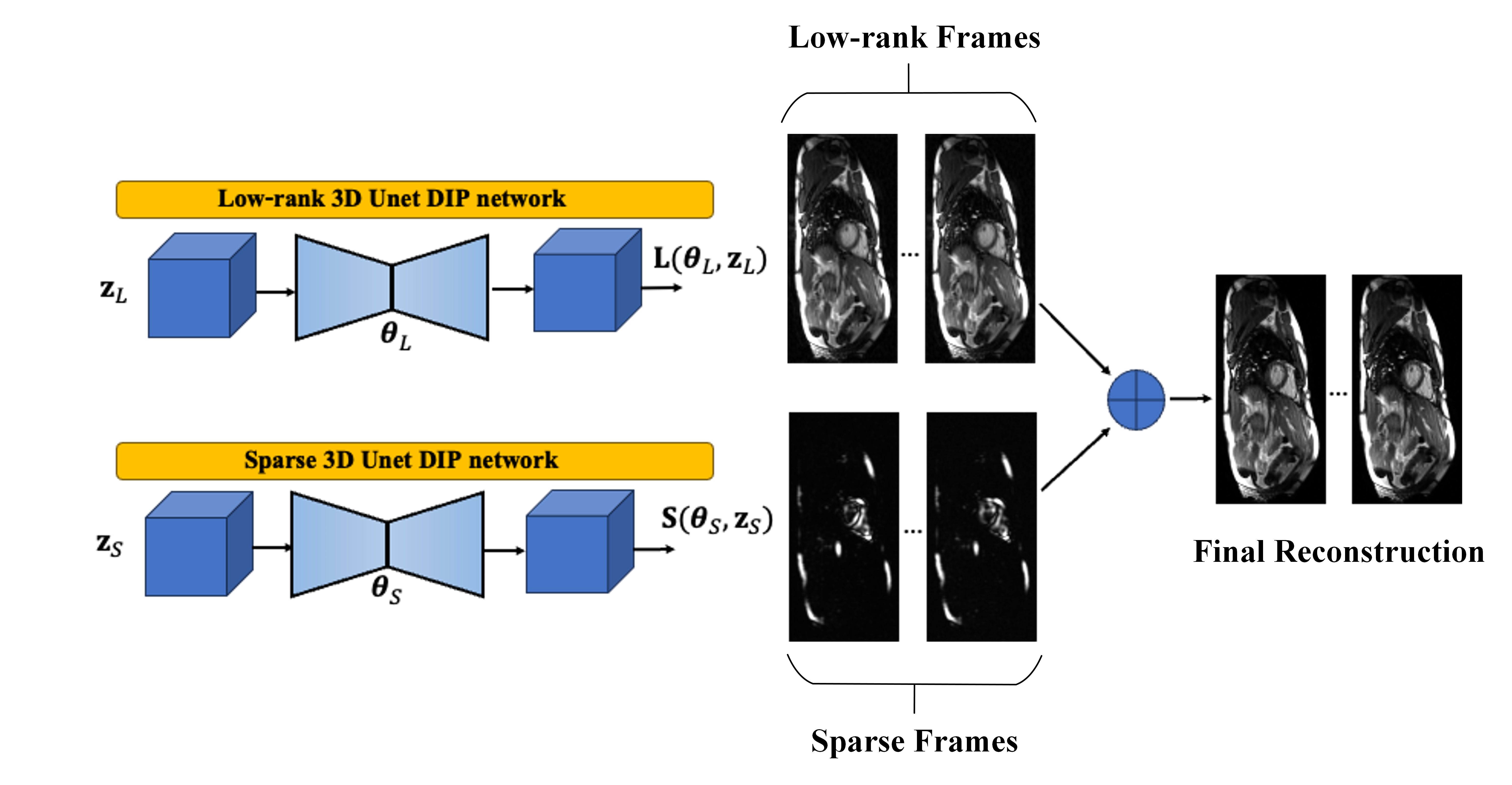}
    \caption{
Overview of the proposed L+S DIP reconstruction framework. The dynamic MR image sequence is decomposed into a low-rank component $\mathbf{L}$ and a sparse component $\mathbf{S}$, each parameterized by a separate untrained DIP network. The network outputs are combined as $\mathbf{X}=\mathbf{L}+\mathbf{S}$ and optimized using the MRI data-consistency term together with low-rank and sparsity regularization within the proposed eADMM algorithm (Eq.~\eqref{model}).
}
\label{fig:BD}
\end{figure}

The algorithm is run for $K$ optimization iterations. In Steps~5 and 7, the extrapolated variables are obtained. In Step~6, the $\mathbf{V}$-subproblem is solved via singular value thresholding \cite{cai2010singular}, whereas the $\mathbf{U}$-subproblem admits a closed-form proximal update with soft-thresholding. This is followed by solving the DIP network update subproblems by performing several steps of backpropagation using an appropriate optimizer (e.g., the Adam optimizer \cite{kingma2014adam}). An illustrative block diagram of our proposed method is given in Figure~\ref{fig:BD}.

\begin{algorithm}[t]
\caption{Extrapolated ADMM for our method}
\label{alg:eadmm}
\begin{algorithmic}[1]
\REQUIRE Measurements $\{\mathbf{y}_t,\mathbf{A}\}_{t=1}^T$, regularization parameters $\lambda_L,\lambda_S$, penalty parameters $\rho_L,\rho_S$, extrapolation parameters $\alpha,\beta$.
\STATE Initialize $\mathbf{V}^0,\mathbf{U}^0,\mathbf{D}_L^0,\mathbf{D}_S^0,\bm{\theta}_L^0,\bm{\theta}_S^0,\mathbf{z}_L,\mathbf{z}_S$
\STATE Set $\mathbf{V}^{-1}=\mathbf{V}^0$, $\mathbf{U}^{-1}=\mathbf{U}^0$
\FOR{$k=0,1,\dots,K-1$}
    \STATE Compute current network outputs
    $
    \mathbf{L}^k = \mathbf{L}(\bm{\theta}_L^k,\mathbf{z}_L),
    \mathbf{S}^k = \mathbf{S}(\bm{\theta}_S^k,\mathbf{z}_S)
    $
    \STATE Extrapolate sparse variable
    $
    \bar{\mathbf{U}}^k = \mathbf{U}^k + \beta(\mathbf{U}^k-\mathbf{U}^{k-1})
    $
    \STATE Update low-rank auxiliary variable 
    \[
    \mathbf{V}^{k+1}
    =
    \arg\min_{\mathbf{V}}
    \frac{1}{2} \|\mathbf{Y} - \mathbf{A}(\mathbf{V}+\bar{\mathbf{U}}^k)\|_2^2
    + \lambda_L \|\mathbf{V}\|_*
    + \frac{\rho_L}{2}\|\mathbf{V}-\mathbf{L}^k+\mathbf{D}_L^k\|_F^2
    \]

    \STATE Extrapolate low-rank variable
    $
    \bar{\mathbf{V}}^{k+1}
    =
    \mathbf{V}^{k+1} + \alpha(\mathbf{V}^{k+1}-\mathbf{V}^{k})
    $

    \STATE Update sparse auxiliary variable
    \[
    \mathbf{U}^{k+1}
    =
    \arg\min_{\mathbf{U}}
    \frac{1}{2} \|\mathbf{Y} - \mathbf{A}(\bar{\mathbf{V}}^{\,k+1}+\mathbf{U})\|_2^2
    + \lambda_S \|\mathbf{W}\mathbf{U}\|_1
    + \frac{\rho_S}{2}\|\mathbf{U}-\mathbf{S}^k+\mathbf{D}_S^k\|_F^2
    \]

    \STATE Update low-rank DIP variables
    \[
    \bm{\theta}_L^{k+1}
    =
    \arg\min_{\bm{\theta}_L}
    \frac{\rho_L}{2}\|\mathbf{L}(\bm{\theta}_L,\mathbf{z}_L)-(\mathbf{V}^{k+1}+\mathbf{D}_L^k)\|_F^2
    \]

    \STATE Update sparse DIP variables
    \[
    \bm{\theta}_S^{k+1}
    =
    \arg\min_{\bm{\theta}_S}
    \frac{\rho_S}{2}\|\mathbf{S}(\bm{\theta}_S,\mathbf{z}_S)-(\mathbf{U}^{k+1}+\mathbf{D}_S^k)\|_F^2
    \]

    \STATE Update
    $
    \mathbf{D}_L^{k+1}
    =
    \mathbf{D}_L^k + \mathbf{V}^{k+1} - \mathbf{L}(\bm{\theta}_L^{k+1},\mathbf{z}_L), 
    $
    $
    \mathbf{D}_S^{k+1}
    =
    \mathbf{D}_S^k + \mathbf{U}^{k+1} - \mathbf{S}(\bm{\theta}_S^{k+1},\mathbf{z}_S)
    $
\ENDFOR
\STATE \textbf{Return} $\mathbf{L}(\bm{\theta}_L^K,\mathbf{z}_L)$ and $\mathbf{S}(\bm{\theta}_S^K,\mathbf{z}_S)$
\end{algorithmic}
\end{algorithm}

\subsection{Theoretical convergence analysis}
Existing DIP methods that utilize ADMM mainly focus on algorithmic design and empirical performance \cite{bell2025tada,cheng2023mcdip}, while theoretical convergence analysis remains largely unexplored. In contrast, in this paper, we provide a systematic convergence analysis for extrapolated ADMM with the two DIP-based priors. 
We analyze the convergence of the proposed extrapolated ADMM in Algorithm~\ref{alg:eadmm} for solving the nonconvex and non-smooth (due to the $\ell_1$ norm regularization) objective function \eqref{model}. 
Due to space constraints, the proofs are deferred to Appendix \ref{app:th}.

By introducing auxiliary variables $\mathbf{V}$ and $\mathbf{U}$, the model \eqref{model} can be equivalently reformulated as the constrained optimization problem in \eqref{eqn: constrained opt}. 
The functions 
\[
g(\mathbf V)=\lambda_L\|\mathbf V\|_*\:, \quad \text{and}  \quad
h(\mathbf U)=\lambda_S\|\mathbf{W} \mathbf U\|_1
\]
are proper lower semicontinuous. 
Moreover, the data-fidelity term 
\[
f(\mathbf V,\mathbf U)=\frac12 \|\mathbf{Y} - \mathbf{A}(\mathbf{U} + \mathbf{V})\|_2^2
\]
is continuously differentiable with globally Lipschitz gradient since it is quadratic in $(\mathbf V,\mathbf U)$.
We now establish the convergence of Algorithm \ref{alg:eadmm} by stating some standard assumptions for \eqref{model}.

\begin{assumption}\label{mass2:Abounded}
The linear operators $\{\mathbf{A}_t\}_{t=1}^T$ and $\mathbf{W}$ are bounded. The sequence
\[
\{(\mathbf V^k,\mathbf U^k,\bm{\theta}_L^k,\bm{\theta}_S^k;\mathbf D_L^k,\mathbf D_S^k)\}_{k\ge0}
\] 
generated by Algorithm~\ref{alg:eadmm} is bounded.
\end{assumption}

\begin{assumption}\label{mass3:inexact}
There exist nonnegative sequences $\{\varepsilon_k^L\}$ and $\{\varepsilon_k^S\}$ such that
$
\sum_{k=0}^{\infty}\varepsilon_k^L<\infty$, 
$\sum_{k=0}^{\infty}\varepsilon_k^S<\infty,
$
and
\begin{equation}\label{DIPdescent}
\begin{aligned}
&\frac{\rho_L}{2}\|\mathbf L(\bm{\theta}_L^{k+1},\mathbf{z}_L)-(\mathbf V^{k+1}+\mathbf D_L^k)\|_F^2
\leq
\frac{\rho_L}{2}\|\mathbf L(\bm{\theta}_L^k,\mathbf{z}_L)-(\mathbf V^{k+1}+\mathbf D_L^k)\|_F^2
+\varepsilon_k^L,\\
&\frac{\rho_S}{2}\|\mathbf S(\bm{\theta}_S^{k+1},\mathbf{z}_S)-(\mathbf U^{k+1}+\mathbf D_S^k)\|_F^2
\leq
\frac{\rho_S}{2}\|\mathbf S(\bm{\theta}_S^{k},\mathbf{z}_S)-(\mathbf U^{k+1}+\mathbf D_S^k)\|_F^2+\varepsilon_k^S
\end{aligned}
\end{equation}
\end{assumption} 
For brevity, we denote
$
\mathcal L^{k}
:=
\mathcal L_{\rho_L,\rho_S}
(\mathbf V^{k},\mathbf U^{k},\bm{\theta}_L^{k},\bm{\theta}_S^{k};\mathbf D_L^{k},\mathbf D_S^{k}).
$ 
To handle the extrapolation terms, we introduce the following Lyapunov function $\Psi^k $ \cite{taylor2018lyapunov} 
with the augmented Lagrange function
\begin{equation}\label{eq:lyapunov}
\Psi^k :=
\mathcal L^{k}
+\frac{\delta_V}{2}\|\mathbf V^k-\mathbf V^{k-1}\|_F^2
+\frac{\delta_U}{2}\|\mathbf U^k-\mathbf U^{k-1}\|_F^2,
\end{equation}
where $\delta_V>0$ and $\delta_U>0$ are constants.
To establish the sufficient descent property, the extrapolation-induced error terms need to be dominated by the strongly convex parts of the $\mathbf V$- and $\mathbf U$-subproblems. 
This leads to the following parameter condition.
\begin{assumption}\label{mass:param}
The Lyapunov parameters $\delta_V,\delta_U$ are chosen such that
\begin{equation}
\delta_V < \tau_V - 2c_1, \qquad
2c_3 < \delta_U < \tau_U - 2c_2,
\end{equation}
where $\tau_V$ and $\tau_U$ are positive strong convexity moduli of the $\mathbf V$- and $\mathbf U$-subproblems, and $c_1:= 2\alpha^2L_f+\frac{\alpha L_f}{2}\eta_1+\frac{\beta L_f}{2}\eta_2$, $c_2:=\frac{\alpha L_f}{2\eta_1}$, and $c_3:=2\beta^2L_f+\frac{\beta L_f}{2\eta_2}$, 
depend on the extrapolation parameters $\alpha,\beta$, the Lipschitz constant $L_f$, and positive constants $\eta_1, \eta_2$.
\end{assumption}

\begin{lemma}\label{mlem:descent}
Suppose Assumptions~\ref{mass2:Abounded}, \ref{mass3:inexact}, and \ref{mass:param} hold. Then, there exist constants $\xi_V,\xi_U>0$ and a positive summable sequence $\{\gamma_k\}$ such that
\begin{equation}
\Psi^{k+1} - \Psi^k
\le
-\xi_V\|\mathbf V^{k+1}-\mathbf V^k\|_F^2
-\xi_U\|\mathbf U^{k+1}-\mathbf U^k\|_F^2
+\gamma_k.
\end{equation}
\end{lemma}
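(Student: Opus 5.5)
The plan is to telescope the change of $\Psi$ over one sweep of Algorithm~\ref{alg:eadmm}. I would write
\[
\mathcal L^{k+1}-\mathcal L^k=\Delta_V+\Delta_U+\Delta_{\theta_L}+\Delta_{\theta_S}+\Delta_D,
\]
where each $\Delta$ is the change in $\mathcal L_{\rho_L,\rho_S}$ when only the named block is moved from its $k$-th to its $(k+1)$-th value, in the order of the algorithm. I would bound each piece separately, add the quadratic memory terms $\frac{\delta_V}{2}\|\mathbf V^{k+1}-\mathbf V^k\|_F^2-\frac{\delta_V}{2}\|\mathbf V^k-\mathbf V^{k-1}\|_F^2$ and the analogous $\mathbf U$ terms, and then collect coefficients.

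\textbf{Primal blocks $\mathbf V$ and $\mathbf U$.} For the $\mathbf V$-step, the subproblem objective in Step~6 is $\tau_V$-strongly convex; at least $\tau_V\ge\rho_L$ from the proximal quadratic. Comparing its value at $\mathbf V^{k+1}$ and at $\mathbf V^k$ gives a decrease of $\frac{\tau_V}{2}\|\mathbf V^{k+1}-\mathbf V^k\|_F^2$ for the surrogate that uses $\bar{\mathbf U}^k$ in place of $\mathbf U^k$.

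To return to $\mathcal L$, I would use the descent lemma for $f$ with constant $L_f$. The discrepancy $f(\cdot,\bar{\mathbf U}^k)-f(\cdot,\mathbf U^k)$ evaluated at $\mathbf V^{k+1}$ minus at $\mathbf V^k$ is an inner product of the form $\langle\nabla f\text{-difference},\mathbf V^{k+1}-\mathbf V^k\rangle$ with the gradient difference $\le L_f\beta\|\mathbf U^k-\mathbf U^{k-1}\|_F$. Young's inequality with parameter $\eta_2$ then produces the terms $\frac{\beta L_f}{2}\eta_2\|\mathbf V^{k+1}-\mathbf V^k\|^2$ and $\frac{\beta L_f}{2\eta_2}\|\mathbf U^k-\mathbf U^{k-1}\|^2$.

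The $\mathbf U$-step is treated the same way, now with $\bar{\mathbf V}^{k+1}=\mathbf V^{k+1}+\alpha(\mathbf V^{k+1}-\mathbf V^k)$. Young with $\eta_1$, together with squared-extrapolation terms of size $2\alpha^2L_f$ and $2\beta^2L_f$ (from expanding $f$ at the extrapolated point), yields exactly the constants $c_1,c_2,c_3$ of Assumption~\ref{mass:param}. The outcome I aim for is
\[
\Delta_V+\Delta_U\le-\Big(\tfrac{\tau_V}{2}-c_1\Big)\|\mathbf V^{k+1}-\mathbf V^k\|^2-\Big(\tfrac{\tau_U}{2}-c_2\Big)\|\mathbf U^{k+1}-\mathbf U^k\|^2+c_3\|\mathbf U^k-\mathbf U^{k-1}\|^2 .
\]
Adding the $\delta$-memory terms, the coefficient of $\|\mathbf U^k-\mathbf U^{k-1}\|^2$ becomes $c_3-\delta_U/2<0$ and can be dropped. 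Assumption~\ref{mass:param} then gives $\xi_V=\frac{\tau_V-2c_1-\delta_V}{2}>0$ and $\xi_U=\frac{\tau_U-2c_2-\delta_U}{2}>0$, up to the factor-of-two bookkeeping I will fix when writing it out.

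\textbf{Network blocks.} $\theta_L$ and $\theta_S$ enter $\mathcal L$ only through $\frac{\rho_L}{2}\|\mathbf V^{k+1}-\mathbf L(\theta_L)+\mathbf D_L^k\|^2$ and its $\mathbf S$ analogue. Assumption~\ref{mass3:inexact} therefore gives directly $\Delta_{\theta_L}\le\varepsilon^L_k$ and $\Delta_{\theta_S}\le\varepsilon^S_k$.

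\textbf{Dual block (main obstacle).} A direct computation with $r=\mathbf V^{k+1}-\mathbf L^{k+1}$ gives
\[
\Delta_D=\rho_L\|\mathbf D_L^{k+1}-\mathbf D_L^k\|_F^2+\rho_S\|\mathbf D_S^{k+1}-\mathbf D_S^k\|_F^2\ge0 .
\]
This is the ascent that must be controlled. In classical nonconvex ADMM, the last block's optimality condition expresses $\mathbf D^{k+1}-\mathbf D^k$ through primal differences. Here the last block is an inexact DIP fit with no first-order condition available, so that route is unavailable.

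My first attempt would be to include $\Delta_D$ in $\gamma_k:=\varepsilon_k^L+\varepsilon_k^S+\rho_L\|\mathbf D_L^{k+1}-\mathbf D_L^k\|^2+\rho_S\|\mathbf D_S^{k+1}-\mathbf D_S^k\|^2$. Summability would then come from showing the constraint residuals $\|\mathbf V^{k+1}-\mathbf L^{k+1}\|$ and $\|\mathbf U^{k+1}-\mathbf S^{k+1}\|$ are square-summable. To get this, I would try to use the inexact-fit inequality in Assumption~\ref{mass3:inexact}, with $\mathbf L^{k+1}$ nearly minimizing the distance to $\mathbf V^{k+1}+\mathbf D_L^k$, together with the boundedness in Assumption~\ref{mass2:Abounded}.

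If that fails, I would make explicit that the residual summability is implied by, or must be added to, the inexactness condition: the DIP fit drives $\mathbf L^{k+1}$ close to $\mathbf V^{k+1}+\mathbf D_L^k$ with summable error. I expect this step, rather than the routine Young-inequality bookkeeping, to be where the argument is delicate.
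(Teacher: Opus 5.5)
\textbf{The proposal leaves a genuine gap at the dual step, and it cannot be closed from the stated hypotheses.}

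Your primal and network blocks are fine. Telescoping block by block and bounding the mixed second differences of $f$ directly is a slightly cleaner variant of the paper's estimate of $R_k$. In your route the $2\alpha^2L_f$ and $2\beta^2L_f$ terms do not even arise, which is harmless. Your constants $\xi_V=\frac{\tau_V-2c_1-\delta_V}{2}$ and $\xi_U=\frac{\tau_U-2c_2-\delta_U}{2}$ are already correct, so no factor of two needs fixing.

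The gap is the one you flagged, and your formula for $\Delta_D$ is correct. What you actually prove is the descent inequality with $\gamma_k$ replaced by $\varepsilon_k^L+\varepsilon_k^S+\rho_L\|\mathbf D_L^{k+1}-\mathbf D_L^k\|_F^2+\rho_S\|\mathbf D_S^{k+1}-\mathbf D_S^k\|_F^2$. Your plan for making the dual terms summable cannot work. Assumption~\ref{mass3:inexact} only compares the fit at $\bm{\theta}_L^{k+1}$ with the fit at $\bm{\theta}_L^k$, so it says nothing about the size of $\mathbf V^{k+1}-\mathbf L^{k+1}$. Boundedness of $\mathbf D_L^k$ bounds only the partial sums of $\mathbf D_L^{k+1}-\mathbf D_L^k$, not the sums of their squares. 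You are right that the usual nonconvex-ADMM device, expressing the dual increment through the last block's optimality condition, is unavailable here.

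In fact the stated hypotheses cannot yield the lemma at all. Take scalars with $\mathbf A=\mathbf W=1$, $\mathbf Y=0$, $\lambda_L=\lambda_S=0$, $\alpha=\beta=0$, $\rho_L=\rho_S=1$, and identity generators $\mathbf L(\bm{\theta}_L)=\bm{\theta}_L$, $\mathbf S(\bm{\theta}_S)=\bm{\theta}_S$. Consider the states $(\mathbf V,\mathbf U,\mathbf L,\mathbf D_L,\mathbf S,\mathbf D_S)=(-3,1,\tfrac{11}{3},-\tfrac{10}{3},1,0)$ and $(3,-1,-\tfrac{11}{3},\tfrac{10}{3},-1,0)$. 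They are mapped into each other by the exact $\mathbf V$- and $\mathbf U$-updates, an exact $\mathbf S$-fit, the dual updates, and an $\mathbf L$-fit whose distance to its target drops from $4$ to $\tfrac{10}{3}$. Hence Assumptions~\ref{mass2:Abounded} and~\ref{mass3:inexact} hold with $\varepsilon_k^L=\varepsilon_k^S=0$. Assumption~\ref{mass:param} holds with $\tau_V=\tau_U=2$, $c_1=c_2=c_3=0$, and $\delta_V=\delta_U=1$. Yet $\Psi^k$ is constant for $k\ge1$ while $\|\mathbf V^{k+1}-\mathbf V^k\|_F=6$, which forces $\gamma_k\ge 36\,\xi_V$ for all $k\ge1$. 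The orbit can also be entered from $\mathbf D_L^0=\mathbf D_S^0=0$ with a single nonzero $\varepsilon_0^L$.

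So your fallback is genuinely needed: add, e.g., $\sum_k\bigl(\|\mathbf D_L^{k+1}-\mathbf D_L^k\|_F^2+\|\mathbf D_S^{k+1}-\mathbf D_S^k\|_F^2\bigr)<\infty$ as a hypothesis, after which your $\gamma_k$ works. That proves a modified lemma, not the stated one.

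The paper's proof does not supply the missing idea either. \eqref{eq:Dl-identity} just restates $\mathbf V^{k+1}-\mathbf L^{k+1}+\mathbf D_L^k=\mathbf D_L^{k+1}$. The passage to \eqref{eq:AL-descent} overlooks that $\mathcal L^{k+1}$ contains $\frac{\rho_L}{2}\|\mathbf V^{k+1}-\mathbf L^{k+1}+\mathbf D_L^{k+1}\|_F^2-\frac{\rho_L}{2}\|\mathbf D_L^{k+1}\|_F^2$. This exceeds the left side of \eqref{eq:Dl-identity} by exactly $\rho_L\|\mathbf D_L^{k+1}-\mathbf D_L^k\|_F^2$, and likewise for $\mathbf S$. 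In other words, the paper silently drops your $\Delta_D$.
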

\begin{remark}\label{re:epar}
The sufficient descent condition requires 
Assumption \ref{mass:param}. 
Hence, there exists a nonempty neighborhood of $(\alpha,\beta)$ such that the sufficient descent property holds for all extrapolation parameters in this neighborhood. 
In particular, a sufficient choice is
\begin{equation}\label{eqn: extrapolation}
0\le \alpha<\frac{\eta_1(\tau_U-\delta_U)}{L_f},
\qquad
0\le \beta<\beta_{\max},
\end{equation}
where $\beta_{\max}$ is the positive root of $2L_f\beta^2+\frac{L_f}{2\eta_2}\beta-\frac{\delta_U}{2}=0$.
\end{remark}

As a direct consequence of Lemma~\ref{mlem:descent}, we have the following lemma. 
\begin{lemma}\label{lem:asymptotic}
Suppose Assumptions~\ref{mass2:Abounded}, \ref{mass3:inexact}, and \ref{mass:param} hold. Then
\begin{equation}\label{eq:square-summable}
\sum_{k=0}^{\infty}\|\mathbf V^{k+1}-\mathbf V^k\|_F^2<\infty,
\qquad
\sum_{k=0}^{\infty}\|\mathbf U^{k+1}-\mathbf U^k\|_F^2<\infty.
\end{equation}
Consequently,
\begin{equation}\label{eq:vanishing-VU}
\|\mathbf V^{k+1}-\mathbf V^k\|_F\to0,
\qquad
\|\mathbf U^{k+1}-\mathbf U^k\|_F\to0.
\end{equation}
\end{lemma}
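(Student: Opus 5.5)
The plan is to telescope the sufficient descent inequality of Lemma~\ref{mlem:descent} and then use a lower bound on the Lyapunov sequence $\{\Psi^k\}$. Summing the inequality of Lemma~\ref{mlem:descent} from $k=0$ to $N-1$ gives
\begin{equation*}
\xi_V\sum_{k=0}^{N-1}\|\mathbf V^{k+1}-\mathbf V^k\|_F^2
+\xi_U\sum_{k=0}^{N-1}\|\mathbf U^{k+1}-\mathbf U^k\|_F^2
\le \Psi^0-\Psi^N+\sum_{k=0}^{N-1}\gamma_k .
\end{equation*}
The last sum is bounded by $\Gamma:=\sum_{k=0}^\infty\gamma_k<\infty$, since $\{\gamma_k\}$ is summable. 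Moreover, $\Psi^0$ is finite because the initial iterates are finite and $\mathbf V^{-1}=\mathbf V^0$, $\mathbf U^{-1}=\mathbf U^0$. It therefore suffices to show that $\{\Psi^N\}$ is bounded below by some finite constant $\underline{\Psi}$.

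For the lower bound I will use Assumption~\ref{mass2:Abounded}. The whole sequence $(\mathbf V^k,\mathbf U^k,\bm\theta_L^k,\bm\theta_S^k;\mathbf D_L^k,\mathbf D_S^k)$ lies in a bounded set. The DIP generators $\mathbf L(\cdot,\mathbf z_L)$ and $\mathbf S(\cdot,\mathbf z_S)$ are compositions of continuous layers, so they are continuous in $\bm\theta$ and map bounded sets to bounded sets. The operators $\mathbf A$ and $\mathbf W$ are bounded. Hence every term of $\mathcal L^k$ in \eqref{eq:aug_lag} is bounded along the iterates, and in particular the negative terms $-\frac{\rho_L}{2}\|\mathbf D_L^k\|_F^2$ and $-\frac{\rho_S}{2}\|\mathbf D_S^k\|_F^2$ are bounded below. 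All the other terms of $\mathcal L^k$ are nonnegative, as are the $\delta_V,\delta_U$ terms in \eqref{eq:lyapunov}. So $\Psi^k\ge -\frac{\rho_L}{2}\sup_k\|\mathbf D_L^k\|_F^2-\frac{\rho_S}{2}\sup_k\|\mathbf D_S^k\|_F^2=:\underline{\Psi}>-\infty$.

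Letting $N\to\infty$ in the telescoped inequality and dividing by $\min\{\xi_V,\xi_U\}>0$ gives both series in \eqref{eq:square-summable}, each bounded by $(\Psi^0-\underline\Psi+\Gamma)/\min\{\xi_V,\xi_U\}$. The limits in \eqref{eq:vanishing-VU} then follow, because the terms of a convergent series of nonnegative numbers tend to zero.

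The only step that is not mechanical is the lower bound on $\Psi^k$, because the augmented Lagrangian contains the negative dual terms. If one did not want to rely on boundedness of $\mathbf D^k$, one would have to control the dual variables through the network optimality conditions, which is not available for inexact DIP updates. Here, however, Assumption~\ref{mass2:Abounded} directly supplies the needed bound, so I expect the argument to go through as above.
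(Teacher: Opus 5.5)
Your proposal is correct and follows the same route as the paper. You telescope the sufficient descent inequality of Lemma~\ref{mlem:descent}, use the summability of $\{\gamma_k\}$ and a lower bound on $\{\Psi^k\}$, and conclude that the terms of the resulting convergent nonnegative series vanish. The one difference favors you: the paper only asserts that $\Psi^k$ is bounded below, whereas you justify it. Every term of $\Psi^k$ other than $-\frac{\rho_L}{2}\|\mathbf D_L^k\|_F^2-\frac{\rho_S}{2}\|\mathbf D_S^k\|_F^2$ is nonnegative, and these two terms are controlled by the boundedness in Assumption~\ref{mass2:Abounded}.
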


We next show that the limiting subgradient of the Lyapunov function can be controlled by successive differences of the iterates.

\begin{lemma}\label{mlem:subgradient}
Suppose Assumptions~\ref{mass2:Abounded}, \ref{mass3:inexact}, and \ref{mass:param} hold. Denote $\mathbf{L}^{k}:=\mathbf L(\bm{\theta}_L^{k},\mathbf{z}_L)$ and $\mathbf{S}^k:=\mathbf S(\bm{\theta}_S^{k},\mathbf{z}_S)$, then there exists a constant $C>0$ and a summable sequence $\{\gamma_k\}$ such that
\begin{equation}
\begin{aligned}
\mathrm{dist}(0,\partial \Psi^{k+1})
\le\;&
C\Big(
\|\mathbf V^{k+1}-\mathbf V^k\|_F
+\|\mathbf U^{k+1}-\mathbf U^k\|_F
+\|\mathbf U^k-\mathbf U^{k-1}\|_F
\\&
+\|\mathbf L^{k+1}-\mathbf L^k\|_F
+\|\mathbf S^{k+1}-\mathbf S^k\|_F
+\|\mathbf D_L^{k+1}-\mathbf D_L^k\|_F
+\|\mathbf D_S^{k+1}-\mathbf D_S^k\|_F
\Big)
+\gamma_k,
\end{aligned}
\label{eq:subgrad}
\end{equation}
where $\partial \Psi$ denotes the (limiting) subdifferential of $\Psi$, and $\mathrm{dist}(0,\partial \Psi^{k+1})$ measures how close the iterate is to satisfying the first-order optimality condition.
\end{lemma}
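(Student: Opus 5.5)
The plan is to compute $\partial\Psi^{k+1}$ block by block and bound one element from each block. The subdifferential is taken with respect to the full variable $(\mathbf V,\mathbf U,\bm\theta_L,\bm\theta_S,\mathbf D_L,\mathbf D_S,\mathbf V',\mathbf U')$, where $\mathbf V',\mathbf U'$ are the lagged copies entering the extrapolation terms of \eqref{eq:lyapunov}.

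Since $f$ and the coupling terms are $C^1$ and $g,h$ are separable and convex, the subdifferential splits as
\[
\partial\Psi^{k+1}\supseteq \big(\nabla_{\mathbf V}(f+\text{coupling})+\partial g+\delta_V(\mathbf V^{k+1}-\mathbf V^k)\big)\times\big(\nabla_{\mathbf U}(\cdot)+\partial h+\delta_U(\mathbf U^{k+1}-\mathbf U^k)\big)\times\cdots .
\]
It therefore suffices to exhibit one element of each block with controlled norm.

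\textbf{$\mathbf V$-block.} The optimality condition of Step 6 gives
\[
0\in \nabla_{\mathbf V}f(\mathbf V^{k+1},\bar{\mathbf U}^k)+\partial g(\mathbf V^{k+1})+\rho_L(\mathbf V^{k+1}-\mathbf L^k+\mathbf D_L^k).
\]
We pick $\xi\in\partial g(\mathbf V^{k+1})$ from this inclusion and substitute it into the $\mathbf V$-partial of $\Psi^{k+1}$. The residual is
\[
\nabla_{\mathbf V}f(\mathbf V^{k+1},\mathbf U^{k+1})-\nabla_{\mathbf V}f(\mathbf V^{k+1},\bar{\mathbf U}^k)+\rho_L(\mathbf L^k-\mathbf L^{k+1})+\rho_L(\mathbf D_L^{k+1}-\mathbf D_L^k)+\delta_V(\mathbf V^{k+1}-\mathbf V^k).
\]
By $L_f$-Lipschitzness,
\[
\|\mathbf U^{k+1}-\bar{\mathbf U}^k\|\le(1+\beta)\|\mathbf U^{k+1}-\mathbf U^k\|+\beta\|\mathbf U^k-\mathbf U^{k-1}\|.
\]
This is where the $\|\mathbf U^k-\mathbf U^{k-1}\|$ term in \eqref{eq:subgrad} comes from.

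\textbf{$\mathbf U$-block.} The same argument applied to Step 8 produces $\bar{\mathbf V}^{k+1}-\mathbf V^{k+1}=\alpha(\mathbf V^{k+1}-\mathbf V^k)$, together with the $\mathbf S$ and $\mathbf D_S$ differences and the $\delta_U$ term.

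\textbf{Dual blocks.} The partial of $\Psi$ with respect to $\mathbf D_L$ is
\[
\rho_L(\mathbf V^{k+1}-\mathbf L^{k+1})=\rho_L(\mathbf D_L^{k+1}-\mathbf D_L^k),
\]
and the $\mathbf D_S$ block is handled the same way.

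\textbf{Lagged copies.} The partials of the extrapolation terms with respect to $\mathbf V',\mathbf U'$ are $-\delta_V(\mathbf V^{k+1}-\mathbf V^k)$ and $-\delta_U(\mathbf U^{k+1}-\mathbf U^k)$, both already of the desired form.

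\textbf{Network blocks (main obstacle).} The partial of $\Psi$ with respect to $\bm\theta_L$ is
\[
\nabla_{\bm\theta_L}\Psi=-\rho_L\,J_L(\bm\theta_L^{k+1})^{*}\big(\mathbf V^{k+1}-\mathbf L^{k+1}+\mathbf D_L^{k+1}\big),
\]
where $J_L$ is the Jacobian of $\bm\theta_L\mapsto\mathbf L(\bm\theta_L,\mathbf z_L)$. No exact optimality condition is available here, since Step 9 is only approximately solved. The first step is to compare this with the residual of the subproblem objective, $J_L^*(\mathbf L^{k+1}-\mathbf V^{k+1}-\mathbf D_L^k)$. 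Since
\[
\mathbf V^{k+1}-\mathbf L^{k+1}+\mathbf D_L^{k+1}=(\mathbf V^{k+1}-\mathbf L^{k+1}+\mathbf D_L^k)+(\mathbf D_L^{k+1}-\mathbf D_L^k),
\]
the two differ by $J_L^*(\mathbf D_L^{k+1}-\mathbf D_L^k)$. That piece is bounded by $\sup\|J_L\|\,\|\mathbf D_L^{k+1}-\mathbf D_L^k\|$, using boundedness of $\bm\theta$ from Assumption \ref{mass2:Abounded} and smoothness of the networks, with $C^1$ activations assumed on the bounded set.

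It remains to bound the subproblem residual $\|\nabla_{\bm\theta_L}\tfrac{\rho_L}{2}\|\mathbf L(\bm\theta_L^{k+1})-(\mathbf V^{k+1}+\mathbf D_L^k)\|^2\|$. The plan is to treat it as an inexactness tolerance. I would interpret Assumption \ref{mass3:inexact}, strengthened in the usual way, as guaranteeing that this stationarity residual is at most a summable $\varepsilon_k^L$. Alternatively, if the inner solver is a gradient step with a fixed step size, I would write the residual as a multiple of the parameter increment and then use Lipschitz continuity of the generators on bounded sets, which gives $\|\mathbf L^{k+1}-\mathbf L^k\|$-type terms. Either way the contribution is absorbed into $C(\cdot)+\gamma_k$, and the $\bm\theta_S$ block is treated identically.

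\textbf{Assembly.} Summing the block bounds with the triangle inequality and taking $C$ as the maximum of the constants (built from $\rho_L,\rho_S,L_f,\alpha,\beta,\delta_V,\delta_U$ and the Jacobian bounds) gives \eqref{eq:subgrad}, with $\gamma_k$ collecting the inexactness residuals of the network subproblems.
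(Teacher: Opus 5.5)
Your main argument is the paper's proof. Both build an element of $\partial\Psi^{k+1}$ block by block, using the optimality conditions of the $\mathbf V$- and $\mathbf U$-subproblems, the identity $\mathbf V^{k+1}-\mathbf L^{k+1}=\mathbf D_L^{k+1}-\mathbf D_L^k$ for the dual blocks, and an inexactness bound for the network blocks.

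You are right that the network step needs a stationarity-type condition that Assumption~\ref{mass3:inexact} does not supply, since that assumption only controls function values. The $\bm\theta_L$-block of $\partial\Psi^{k+1}$ is $-\rho_L J_L^*(2\mathbf D_L^{k+1}-\mathbf D_L^k)$. Its part $-\rho_L J_L^*\mathbf D_L^{k+1}$ is exactly the gradient of the network subproblem at $\bm\theta_L^{k+1}$. The frozen update $\bm\theta_L^{k+1}=\bm\theta_L^k$ satisfies Assumption~\ref{mass3:inexact} with $\varepsilon_k^L=0$ and leaves that term uncontrolled, so the strengthening is necessary, not optional.

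The paper makes the same strengthening silently: it asserts $\mathrm{dist}(0,\partial\Phi_L^k(\bm\theta_L^{k+1}))\le\varepsilon_k^L$ ``from Assumption~\ref{mass3:inexact}'' and then identifies that residual directly with an element of $\partial_{\bm\theta_L}\Psi^{k+1}$. The paper's proof needs, but omits, three things you supply:
\begin{itemize}
\item the correction term $\rho_L J_L^*(\mathbf D_L^{k+1}-\mathbf D_L^k)$, which accounts for $\mathbf D_L^k$ in the subproblem versus $\mathbf D_L^{k+1}$ in $\Psi^{k+1}$;
\item the $C^1$, bounded-Jacobian hypothesis that this correction relies on;
\item the treatment of the lagged variables.
\end{itemize}
Your version is the more careful one.

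Drop the gradient-step fallback, however. It bounds the residual by $\|\bm\theta_L^{k+1}-\bm\theta_L^k\|$. Lipschitz continuity of $\bm\theta_L\mapsto\mathbf L(\bm\theta_L,\mathbf z_L)$ only gives $\|\mathbf L^{k+1}-\mathbf L^k\|\lesssim\|\bm\theta_L^{k+1}-\bm\theta_L^k\|$, which is the wrong direction; an overparameterized generator has no inverse-Lipschitz bound. So that route produces parameter increments, not the increments on the right-hand side of the lemma.
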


\begin{theorem}\label{thm:basic}
Suppose Assumptions~\ref{mass2:Abounded}, \ref{mass3:inexact}, and \ref{mass:param} hold. 
Let the sequence
$\{(\mathbf V^k,\mathbf U^k,\bm{\theta}_L^k,\bm{\theta}_S^k,\mathbf D_L^k,\mathbf D_S^k)\}$
be generated by Algorithm~\ref{alg:eadmm}. Then:
\begin{itemize}
\item[(i)] The set of cluster points of the sequence is nonempty and compact.

\item[(ii)] The sequence $\{\Psi^k\}$ converges. 

\end{itemize}
\end{theorem}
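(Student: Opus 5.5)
Part (i) follows from Assumption~\ref{mass2:Abounded} together with Bolzano--Weierstrass; part (ii) follows from Lemma~\ref{mlem:descent} together with a lower bound on $\Psi^k$ and a standard lemma on quasi-monotone sequences.

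\emph{Part (i).} By Assumption~\ref{mass2:Abounded}, the sequence $\{(\mathbf V^k,\mathbf U^k,\bm{\theta}_L^k,\bm{\theta}_S^k,\mathbf D_L^k,\mathbf D_S^k)\}$ lies in a bounded subset of a finite-dimensional space. Bolzano--Weierstrass then gives a convergent subsequence, so the set $\Omega$ of cluster points is nonempty. $\Omega$ is bounded because it lies in the closure of the bounded iterate set. It is closed because it can be written as $\Omega=\bigcap_{m\ge0}\overline{\{w^k:k\ge m\}}$, an intersection of closed sets. Hence $\Omega$ is compact.

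\emph{Lower bound for $\Psi^k$.} I would show that $\{\Psi^k\}$ is bounded below. The terms $f$, $g$, $h$ and the two extrapolation penalties in \eqref{eq:lyapunov} are nonnegative. The only terms that can be negative are $-\frac{\rho_L}{2}\|\mathbf D_L^k\|_F^2$ and $-\frac{\rho_S}{2}\|\mathbf D_S^k\|_F^2$, and these are bounded below because the dual iterates are bounded by Assumption~\ref{mass2:Abounded}. Thus $\Psi^k\ge -\frac{\rho_L}{2}\sup_k\|\mathbf D_L^k\|_F^2-\frac{\rho_S}{2}\sup_k\|\mathbf D_S^k\|_F^2=:\underline{\Psi}>-\infty$.

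I would also check that $\Psi^0$ is finite. At $k=0$ the extrapolation penalties vanish because $\mathbf V^{-1}=\mathbf V^0$ and $\mathbf U^{-1}=\mathbf U^0$, and $\mathcal L^0$ is finite. Alternatively, boundedness of the iterates and continuity of $\mathbf L(\cdot,\mathbf z_L)$ and $\mathbf S(\cdot,\mathbf z_S)$ in the network parameters (I will assume the DIP generators are continuous) give upper boundedness directly.

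\emph{Part (ii).} From Lemma~\ref{mlem:descent}, $\Psi^{k+1}\le\Psi^k+\gamma_k$ with $\sum_k\gamma_k<\infty$. Set $\Phi^k:=\Psi^k-\sum_{j<k}\gamma_j$. Then $\Phi^{k+1}\le\Phi^k$, so $\{\Phi^k\}$ is nonincreasing. It is also bounded below, since $\Phi^k\ge\underline{\Psi}-\sum_j\gamma_j$. Therefore $\Phi^k$ converges, and adding back the convergent partial sums $\sum_{j<k}\gamma_j$ shows that $\Psi^k$ converges. The same telescoping bound is what underlies Lemma~\ref{lem:asymptotic}.

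The only step that needs care is the lower bound. The Lyapunov function contains the negative dual terms, and in nonconvex ADMM analyses the lower bound is usually derived from the dual update. Here that derivation is unnecessary, because Assumption~\ref{mass2:Abounded} bounds the dual iterates directly, so I expect no real obstacle.
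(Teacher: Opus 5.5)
Your proposal is correct and follows the paper's proof. Part (i) uses Bolzano--Weierstrass, and part (ii) uses a monotone shifted Lyapunov sequence obtained from Lemma~\ref{mlem:descent}; your $\Phi^k=\Psi^k-\sum_{j<k}\gamma_j$ differs from the paper's $\widehat\Psi^k=\Psi^k+\sum_{j\ge k}\gamma_j$ only by the constant $\sum_{j}\gamma_j$. Your explicit lower bound on $\Psi^k$, derived from the boundedness of $\mathbf D_L^k,\mathbf D_S^k$, usefully fills in a step the paper only asserts.
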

\begin{remark}\label{mass:auxiliary}
The estimate in Lemma~\ref{mlem:subgradient} involves auxiliary increments
\[
\|\mathbf L^{k+1}-\mathbf L^k\|_F,\quad
\|\mathbf S^{k+1}-\mathbf S^k\|_F,\quad
\|\mathbf D_L^{k+1}-\mathbf D_L^k\|_F,\quad
\|\mathbf D_S^{k+1}-\mathbf D_S^k\|_F.
\]
Controlling or vanishing of these quantities is sufficient to ensure that the limiting subgradient tends to zero.
\end{remark}

\begin{corollary}\label{cor:critical}
If, in addition, the auxiliary increments in Remark~\ref{mass:auxiliary} vanish asymptotically, then every cluster point of the generated sequence is a critical point of $\Psi$.
\end{corollary}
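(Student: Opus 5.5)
Fix a cluster point $\mathbf w^*=(\mathbf V^*,\mathbf U^*,\bm{\theta}_L^*,\bm{\theta}_S^*,\mathbf D_L^*,\mathbf D_S^*)$ of the sequence. By Theorem~\ref{thm:basic}(i) such a point exists, and there is a subsequence $\{k_j\}$ with $\mathbf w^{k_j}\to\mathbf w^*$. The aim is to show $0\in\partial\Psi(\mathbf w^*,\mathbf V^*,\mathbf U^*)$, where $\Psi$ is viewed as a function of the augmented variable that also contains the previous iterates $(\mathbf V^{k-1},\mathbf U^{k-1})$. The argument uses the closedness of the limiting subdifferential: one needs a sequence of subgradients tending to zero, together with convergence of the iterates and of the function values along the subsequence.

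\emph{Step 1 (vanishing subgradients).} Apply Lemma~\ref{mlem:subgradient} at $k=k_j-1$.
\begin{itemize}
\item By Lemma~\ref{lem:asymptotic}, the terms $\|\mathbf V^{k+1}-\mathbf V^k\|_F$, $\|\mathbf U^{k+1}-\mathbf U^k\|_F$ and $\|\mathbf U^k-\mathbf U^{k-1}\|_F$ all tend to zero.
\item By the hypothesis of the corollary, the auxiliary increments of Remark~\ref{mass:auxiliary} also tend to zero.
\item Since $\{\gamma_k\}$ is summable, $\gamma_k\to0$.
\end{itemize}
Hence there are $\bm\zeta^{k_j}\in\partial\Psi^{k_j}$ with $\bm\zeta^{k_j}\to0$. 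The augmented component also converges: $(\mathbf V^{k_j-1},\mathbf U^{k_j-1})\to(\mathbf V^*,\mathbf U^*)$ because the successive differences vanish. So the extrapolation terms in $\Psi^{k_j}$ tend to $0$, which is consistent with the limit point having $\mathbf V^{k-1}=\mathbf V^k$.

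\emph{Step 2 (function value convergence).} Closedness of the limiting subdifferential requires $\Psi^{k_j}\to\Psi(\mathbf w^*)$; this is the main obstacle, since $g$ and $h$ are only assumed lower semicontinuous. Two routes are available.
\begin{itemize}
\item \textbf{Continuity.} Both $g$ and $h$ are in fact continuous (they are norms composed with a bounded linear map $\mathbf W$). The network maps $\bm\theta\mapsto\mathbf L(\bm\theta,\mathbf z_L)$ and $\bm\theta\mapsto\mathbf S(\bm\theta,\mathbf z_S)$ are continuous for standard architectures, with smooth or piecewise-smooth activations. Every term of $\mathcal L$ is then continuous, so $\Psi^{k_j}\to\Psi(\mathbf w^*)$ directly. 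This is the route I would use, stating continuity and local Lipschitz differentiability of the generators as implicitly assumed; that assumption is already needed to make sense of Lemma~\ref{mlem:subgradient}.
\item \textbf{Without continuity.} Use the minimizing property of the $\mathbf V$- and $\mathbf U$-steps. Compare the subproblem value at $\mathbf V^{k_j}$ with that at $\mathbf V^*$, pass to the limit to get $\limsup g(\mathbf V^{k_j})\le g(\mathbf V^*)$, and combine with lower semicontinuity. The same argument applies to $h$.
\end{itemize}

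\emph{Step 3 (conclude).} With $\mathbf w^{k_j}\to\mathbf w^*$, $\Psi^{k_j}\to\Psi(\mathbf w^*)$ and $\bm\zeta^{k_j}\in\partial\Psi^{k_j}$ with $\bm\zeta^{k_j}\to0$, the definition of the limiting subdifferential gives $0\in\partial\Psi(\mathbf w^*)$. Therefore $\mathbf w^*$ is a critical point of $\Psi$. Theorem~\ref{thm:basic}(ii) ensures the limit value is the same for every cluster point, so $\Psi$ is constant on the cluster set.

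Finally, I would remark on the interpretation. Since $\mathbf D_L^{k+1}-\mathbf D_L^k=\mathbf V^{k+1}-\mathbf L^{k+1}\to0$, feasibility $\mathbf V^*=\mathbf L(\bm\theta_L^*,\mathbf z_L)$ holds, and likewise $\mathbf U^*=\mathbf S(\bm\theta_S^*,\mathbf z_S)$. Consequently the critical point corresponds to a KKT-type point of \eqref{eqn: constrained opt}.
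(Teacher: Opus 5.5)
Your proposal is correct and follows the same route as the paper. You obtain $\mathrm{dist}(0,\partial\Psi^{k})\to 0$ from the subgradient bound, the square-summability lemma and the extra hypothesis, then use a convergent subsequence, convergence of function values, and closedness of the limiting subdifferential. Your Step 2 is in fact more careful than the paper, which infers $\Psi(\mathbf B^{k_j})\to\Psi(\mathbf B^*)$ merely from convergence of $\{\Psi^k\}$ (that does not identify the limit). The real justification is continuity of the nuclear norm, of $\|\mathbf W\cdot\|_1$ and of the generator maps, and you also correctly account for the previous-iterate arguments on which $\Psi$ depends.
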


\begin{remark}
If the Lyapunov function $\Psi$ satisfies the Kurdyka--\L ojasiewicz (KL) property, and additional regularity conditions ensure sufficient control of auxiliary increments, then the standard KL framework can be applied to establish convergence of the sequence.
\end{remark}


%% file: sections/experiments.tex
\section{Experiments}

\subsection{Experimental Setup}

We evaluate the proposed L+S DIP framework on dynamic cardiac MRI reconstruction. 
We use the publicly available OCMR dataset~\cite{chen2020ocmrv10}, which provides fully sampled multi-coil cardiac cine MRI acquisitions. The dataset consists of short-axis cardiac sequences with 20--30 temporal frames per slice. In our experiments, we evaluate on 26 cine slices, with each slice preprocessed to a reconstruction matrix size of $300 \times 144$. Fully sampled $k$-space data are retrospectively undersampled to simulate accelerated acquisitions.
We used the fully sampled central $k$-space lines for estimating the sensitivity maps via ESPIRiT~\cite{uecker2014espirit}. We use a Cartesian undersampling pattern with variable-density random sampling~\cite{lustig2007sparse}. More specifically, we sample lines along the phase-encoding direction that are shared across all temporal frames of each cine sequence. 
The fully sampled $k$-space data are retrospectively undersampled to simulate accelerated acquisitions. 
We consider acceleration factors (AFs) of $4\times$, 6$\times$ and $8\times$.

We compare the proposed method against several baselines for dynamic cardiac MRI reconstruction: (i) compressed sensing~\cite{lustig2007sparse} to represent classical optimization methods; (ii) L+S-Net~\cite{huang2021deep} to represent supervised learning-based methods, which unrolls an $L+S$-inspired reconstruction framework and learns reconstruction modules from training data; (iii) Recent DIP-based training-data-free methods: Multi-dynamic DIP~\cite{vornehm2025multi} and Low-rank DIP (LR-DIP)~\cite{hamilton2023low}; and (iv) an Implicit neural representation (INR) method, 
Fourier-feature implicit neural networks for free-breathing cardiac MRI reconstruction~\cite{kunz2024implicit}, to represent non-DIP-based training-data-free method. This selection is intended to evaluate whether the proposed structured $L+S$ DIP formulation improves over both conventional optimization techniques and recent neural data-intensive and data-less approaches.

For our L+S DIP, both DIP generators are implemented as 3D U-Net. The networks are randomly initialized and optimized per scan using Adam with learning rate $3 \times 10^{-4}$. The regularization parameters $\lambda_L$ and $\lambda_S$ are selected based on the study in Table~\ref{tab:lambda_grid} of Appendix~\ref{appen: impact of choice of reg params}. The optimization is performed for \(K=3000\) eADMM outer iterations, and DIP networks' subproblems are optimized with \(N_{\mathrm{in}}=20\) Adam steps at each outer iteration. 
All experiments are conducted on a single NVIDIA RTX PRO 6000 Blackwell Server Edition GPU.

\subsection{Main results}

In Table~\ref{tab:main_results}, we report the reconstruction performance in terms of the average PSNR and SSIM. As observed, across different acceleration factors, the proposed method consistently outperforms the compressed sensing method, other DIP approaches, and INR methods, particularly at higher acceleration factors. For example, at $8\times$, our method achieves 3 dB higher PSNR than the INR-based baseline and nearly 2 dB more when compared to other DIP baselines. 
\begin{table*}[t]
\centering
\caption{
Quantitative comparison on the OCMR dataset at acceleration factors AF = 4$\times$, 6$\times$, and 8$\times$. 
PSNR and SSIM are reported, with higher values indicating better reconstruction quality. 
The proposed L+S DIP method achieves the highest PSNR and SSIM across all acceleration factors, demonstrating strong reconstruction performance without supervised training.}
\label{tab:main_results}
\resizebox{0.88\textwidth}{!}{
\begin{tabular}{lccccccc}
\toprule
\multirow{2}{*}{Method} 
& \multirow{2}{*}{Training data }  
& \multicolumn{2}{c}{AF =  $4\times$} 
& \multicolumn{2}{c}{AF =  $6\times$} 
& \multicolumn{2}{c}{AF = $8\times$} \\
\cmidrule(lr){3-4} \cmidrule(lr){5-6} \cmidrule(lr){7-8}
 & & PSNR & SSIM & PSNR & SSIM & PSNR & SSIM \\
\midrule
Compressed sensing~\cite{lustig2007sparse} & Not needed & 29.37 & 0.84 & 27.33 & 0.79 & 24.48 & 0.74 \\
\midrule
L+S-Net~\cite{huang2021deep} & Needed
& 34.55 & 0.96 & 31.97 & 0.93 & 28.32 & 0.87 \\
\midrule
Fourier-feature INR~\cite{kunz2024implicit} & Not needed & 30.59 & 0.84 & 29.33 & 0.81 & 26.64 & 0.75 \\
\midrule
Multi-dynamic DIP~\cite{vornehm2025multi} & Not needed & 30.06 & 0.79 & 29.97 & 0.77 & 26.44 & 0.71 \\
LR-DIP~\cite{hamilton2023low} & Not needed & 31.45 & 0.84 & 29.81 & 0.84 & 27.73 & 0.78 \\
\midrule
\textbf{L+S DIP (Ours)} & Not needed 
& \textbf{34.62} & \textbf{0.96} 
& \textbf{33.06} & \textbf{0.94} 
& \textbf{29.74} & \textbf{0.90} \\
\bottomrule
\end{tabular}
}
\end{table*}

Compared to the supervised learning method, L+S-Net, our method achieves marginal improvements, indicating that we obtain competitive performance, \textit{all without requiring any training data}. That said, since our method requires optimization for every sequence, the run-time required is higher when compared to the inference time of L+S-Net.



%







\subsection{Impact of the proposed dual architecture}

%


\begin{figure}
    \centering
    \includegraphics[width=0.7\linewidth]{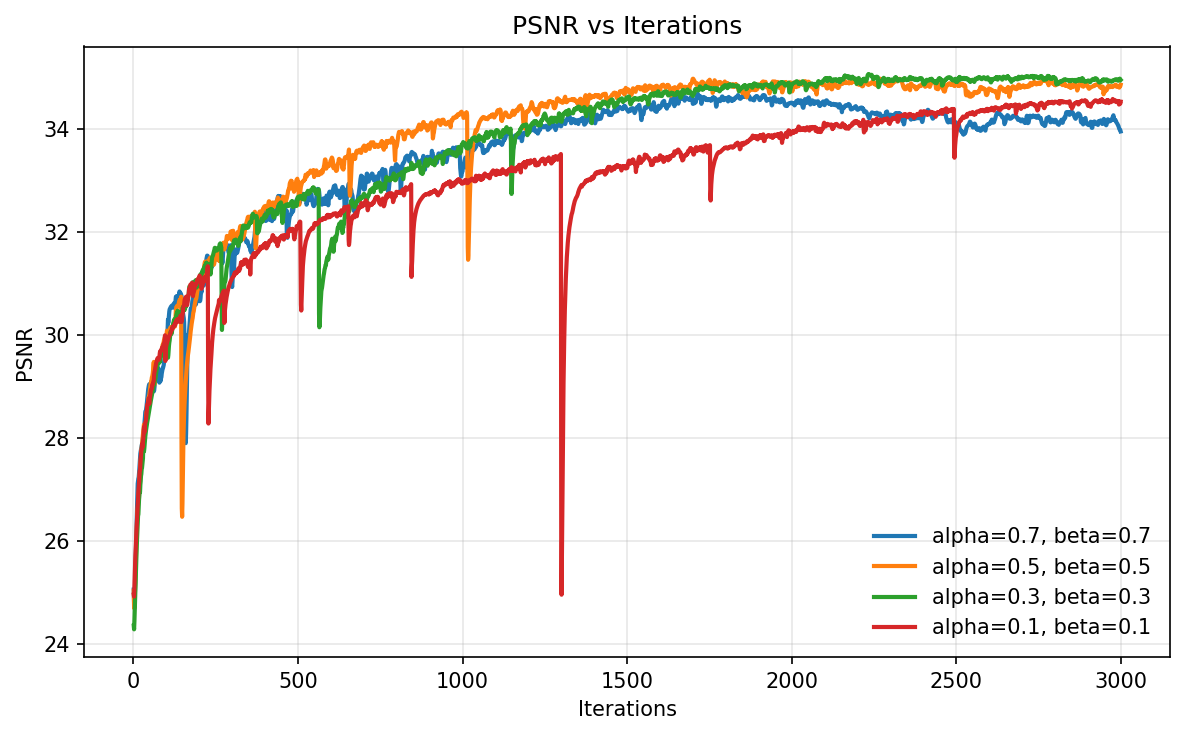}
    \vspace{-0.4cm}
    \caption{{
Effect of the extrapolation parameters $\alpha$ and $\beta$ in the proposed eADMM algorithm on the reconstruction convergence. The curves show PSNR as a function of optimization iteration for different choices of $\alpha$ and $\beta$, demonstrating that moderate extrapolation improves convergence speed.
}}
    \label{fig:alpha_beta}
\end{figure}

Table~\ref{tab:ablation} evaluates the contribution of each component of our framework. Experiments are performed on two slices from the OCMR dataset under a $4\times$ acceleration setting.
The low-rank-only variant sets $\lambda_S=0$, and the sparse-only variant sets $\lambda_L=0$. We also test a single-network DIP variant, where the low-rank and sparse components share the same network. The results show that our framework achieves the best performance, with a PSNR of 33.83 dB. 
\begin{wraptable}{r}{0.48\linewidth}
\vspace{-0.3cm}
\centering
\caption{Average PSNR and SSIM of our method versus different DIP architectures.}
\label{tab:ablation}
\small
\resizebox{0.95\linewidth}{!}{
\begin{tabular}{lcc}
\toprule
Method & PSNR & SSIM \\
\midrule
Single-network DIP ($\theta_L=\theta_S=\theta$) & 31.23 & 0.93 \\
Low-rank-only DIP ($\lambda_S=0$) & 29.82 & 0.86 \\
Sparse-only DIP ($\lambda_L=0$) & 30.18 & 0.86 \\
\midrule
\textbf{Proposed (L+S DIP)} & \textbf{33.83} & \textbf{0.96} \\
\bottomrule
\end{tabular}}
\vspace{-0.4cm}
\end{wraptable}
Our framework improves PSNR by 2.60 dB, 4.01 dB, and 3.65 dB over the single-network, low-rank-only, and sparse-only variants respectively. These results show that the low-rank and sparse components are complementary. The low-rank component models the shared anatomical structure across time, while the sparse component captures temporal variations and residual dynamic details.

Therefore, decomposing the reconstruction into low-rank and sparse components is important for our framework.


\subsection{A study on the acceleration achieved via the extrapolation parameters in Algorithm~\ref{alg:eadmm}}






In this subsection, we study the effect of the extrapolation parameters $\alpha$ and $\beta$ in Algorithm~\ref{alg:eadmm}. 
Figure~\ref{fig:alpha_beta} shows the PSNR curves under different choices of $\alpha$ and $\beta$ as defined in Remark~\ref{re:epar}. 
All experiments are conducted on a single slice from the OCMR dataset with a $4\times$ acceleration factor.
The joint extrapolation setting with $\alpha=\beta = 0.5$ achieves the best performance and convergence, as compared with the extrapolated baseline with the lowest values of $\alpha$ and $\beta$, and the highest values.  
These results suggest that moderate extrapolation can improve the convergence speed of Algorithm~\ref{alg:eadmm}. However, overly large extrapolation parameters may introduce instability.

\subsection{Qualitative results}

\begin{figure}[ht]
\centering
\includegraphics[width=0.9\textwidth]{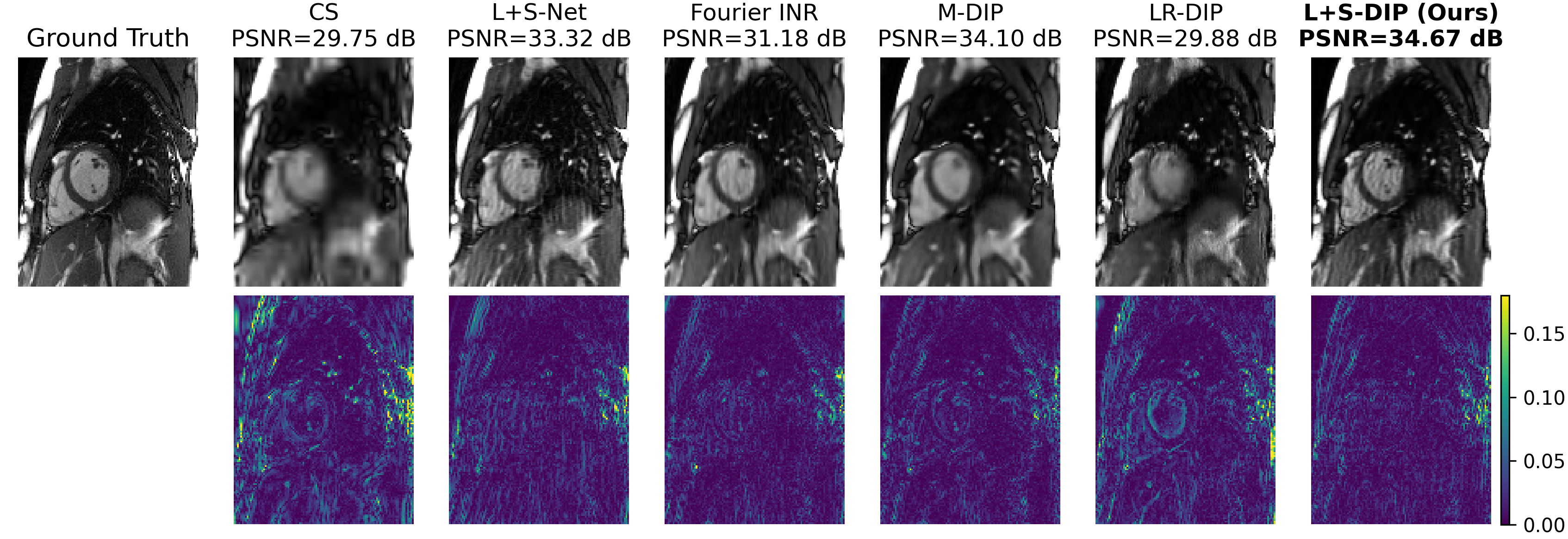}
\vspace{-0.3cm}
\caption{
Comparison of the proposed L+S-DIP framework with other dynamic MRI reconstruction methods on the OCMR dataset under 4$\times$ acceleration. 
The top row shows the ground truth image followed by the reconstructed images, and the bottom row shows the corresponding error maps. 
Values above each reconstruction indicate the PSNR (in dB) over the full image.
The proposed L+S-DIP method better preserves cardiac structure and reduces residual errors around dynamic regions compared with the baseline methods.}
\label{fig:qualitative_results}
\end{figure}

Figure~\ref{fig:qualitative_results} shows qualitative reconstruction results on the OCMR dataset. 
For each method, we show the reconstructed ROI image and the corresponding absolute error map with respect to the ground truth for the same temporal frame. 
The PSNR values reported above the reconstructions provide a quantitative measure of image fidelity for the displayed frame.
Compressed sensing reconstructs the main cardiac structure, but exhibits noticeable blurring and residual artifacts. 
L+S-Net improves the reconstruction quality relative to Compressed Sensing, although some fine anatomical details remain oversmoothed. 
Fourier-feature INR and LR-DIP show visible residual errors around the heart and surrounding dynamic regions. 
Multi-dynamic DIP produces smoother reconstructions, but still loses some fine structural details.

In contrast, the proposed L+S DIP method produces a reconstruction that is visually closer to the ground truth and achieves the highest PSNR among the compared methods for the displayed frame. 
Its error map shows reduced residual error around the cardiac region and surrounding tissues, indicating improved preservation of anatomical structure in dynamic MRI reconstruction.
In Appendix~\ref{appen: uncertainty maps}, we provide uncertainty maps when using different initializations for our L+S DIP method. In Appendix~\ref{appen: low rank and sparse visuals}, we present visualizations on the learned low-rank and sparse components of L+S DIP.

%% file: sections/discussions.tex
\section{Conclusion}
This work proposed a training-free dynamic MRI reconstruction framework that combines deep image priors with explicit low-rank and sparse modeling. By integrating the implicit bias of DIP with structured spatiotemporal regularization, the proposed L+S-DIP approach effectively captures spatiotemporal redundancy in dynamic MRI while avoiding the need for supervised training data. This makes the framework particularly attractive in medical imaging settings where fully sampled datasets are limited. 
Experimental results on the OCMR dataset demonstrate competitive reconstruction quality across multiple acceleration factors, consistently achieving strong PSNR and SSIM performance while preserving fine anatomical structures and reducing reconstruction artifacts. A discussion of the limitations of the proposed approach is provided in Appendix~\ref{appen: limitations}.

%% file: sections/appendix.tex
\newpage
\appendix
\onecolumn
\par\noindent\rule{\textwidth}{1pt}
\begin{center}
{\Large \bf Appendix}
\end{center}
\vspace{-0.1in}
\par\noindent\rule{\textwidth}{1pt}
\appendix

\section{Theoretical convergence analysis}\label{app:th}
To ensure the completeness and readability of the proof, we will restate the problem and the assumptions.

We analyze the convergence of the proposed extrapolated ADMM Algorithm~\ref{alg:eadmm} for solving the nonconvex and nonsmooth problem. 
By introducing auxiliary variables $\mathbf{V}$ and $\mathbf{U}$, the model \eqref{model} can be equivalently reformulated as the constrained optimization problem
\begin{equation} 
\begin{aligned}
\min_{\mathbf{V},\mathbf{U},\bm{\theta}_L,\bm{\theta}_S}~
& \frac12 \|\mathbf{Y} - \mathbf{A}(\mathbf{U} + \mathbf{V})\|_2^2
+ \lambda_L\|\mathbf V\|_*
+ \lambda_S\|\mathbf{W} \mathbf U\|_1 \\
\text{s.t.}\quad
& \mathbf V=\mathbf L(\bm{\theta}_L,\mathbf{z}_L),\qquad
\mathbf U=\mathbf S(\bm{\theta}_S,\mathbf{z}_S), 
\end{aligned}
\end{equation}
The functions $g(\mathbf V)=\lambda_L\|\mathbf V\|_*$ and 
$h(\mathbf U)=\lambda_S\|\mathbf W \mathbf U\|_1$ are proper lower semicontinuous. 
Moreover, the data-fidelity term
\[
f(\mathbf V,\mathbf U) = \frac12 \|\mathbf{Y} - \mathbf{A}(\mathbf{U} + \mathbf{V})\|_2^2
\]
is continuously differentiable with globally Lipschitz gradient since it is quadratic in $(\mathbf V,\mathbf U)$.

\begin{assumption}\label{ass2:Abounded}
The linear operators $\{\mathbf{A}_t\}_{t=1}^T$ and $\mathbf{W}$ are bounded. The sequence

\[
\left\{
\left(
\mathbf V^k,\mathbf U^k,
\bm{\theta}_L^k,\bm{\theta}_S^k;
\mathbf D_L^k,\mathbf D_S^k
\right)
\right\}_{k\ge0}
\]
generated by Algorithm~\ref{alg:eadmm} is bounded.
\end{assumption}

\begin{assumption}\label{ass3:inexact}
There exist nonnegative sequences $\{\varepsilon_k^L\}$ and $\{\varepsilon_k^S\}$ such that
$
\sum_{k=0}^{\infty}\varepsilon_k^L<\infty$, 
$\sum_{k=0}^{\infty}\varepsilon_k^S<\infty,
$
and
\begin{equation}
\begin{aligned}
&\frac{\rho_L}{2}\|\mathbf L(\bm{\theta}_L^{k+1},\mathbf{z}_L)-(\mathbf V^{k+1}+\mathbf D_L^k)\|_F^2
\leq
\frac{\rho_L}{2}\|\mathbf L(\bm{\theta}_L^k,\mathbf{z}_L)-(\mathbf V^{k+1}+\mathbf D_L^k)\|_F^2
+\varepsilon_k^L,\\
&\frac{\rho_S}{2}\|\mathbf S(\bm{\theta}_S^{k+1},\mathbf{z}_S)-(\mathbf U^{k+1}+\mathbf D_S^k)\|_F^2
\leq
\frac{\rho_S}{2}\|\mathbf S(\bm{\theta}_S^{k},\mathbf{z}_S)-(\mathbf U^{k+1}+\mathbf D_S^k)\|_F^2+\varepsilon_k^S
\end{aligned}
\end{equation}
\end{assumption} 
For brevity, we denote
\begin{equation}
\mathcal L^{k}
:=
\mathcal L_{\rho_L,\rho_S}
(\mathbf V^{k},\mathbf U^{k},\bm{\theta}_L^{k},\bm{\theta}_S^{k};\mathbf D_L^{k},\mathbf D_S^{k}).
\end{equation}
To handle the extrapolation terms, we introduce the following Lyapunov function $\Psi^k $ \cite{taylor2018lyapunov} 
with the augmented Lagrange function
\begin{equation}
\Psi^k :=
\mathcal L^{k}
+\frac{\delta_V}{2}\|\mathbf V^k-\mathbf V^{k-1}\|_F^2
+\frac{\delta_U}{2}\|\mathbf U^k-\mathbf U^{k-1}\|_F^2,
\end{equation}
where $\delta_V>0$ and $\delta_U>0$ are constants.
To establish the sufficient descent property, the extrapolation-induced error terms need to be dominated by the strongly convex parts of the $\mathbf V$- and $\mathbf U$-subproblems. 
This leads to the following parameter condition.
\begin{assumption}\label{ass:param}
The Lyapunov parameters $\delta_V,\delta_U$ are chosen such that
\begin{equation}
\delta_V < \tau_V - 2c_1, \qquad
2c_3 < \delta_U < \tau_U - 2c_2,
\end{equation}
where $\tau_V$, $\tau_U$ are positive strong convexity moduli of $\mathbf V$- and $\mathbf U$-subproblems, and $c_1:= 2\alpha^2L_f+\frac{\alpha L_f}{2}\eta_1+\frac{\beta L_f}{2}\eta_2$, $c_2:=\frac{\alpha L_f}{2\eta_1}$, and $c_3:=2\beta^2L_f+\frac{\beta L_f}{2\eta_2}$, 
depend on extrapolation parameters $\alpha,\beta$, the Lipschitz constant $L_f$, and positive constants $\eta_1, \eta_2$.

\end{assumption}
\begin{lemma}\label{lem:descent}
Suppose Assumptions~\ref{ass2:Abounded}, \ref{ass3:inexact} and \ref{ass:param} hold. Then there exist constants $\xi_V,\xi_U>0$ and a summable sequence $\{\gamma_k\}$ such that
\begin{equation}
\Psi^{k+1} - \Psi^k
\le
-\xi_V\|\mathbf V^{k+1}-\mathbf V^k\|_F^2
-\xi_U\|\mathbf U^{k+1}-\mathbf U^k\|_F^2
+\gamma_k.
\end{equation}
\end{lemma}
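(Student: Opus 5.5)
The plan is to split $\Psi^{k+1}-\Psi^k$ into the successive block updates of Algorithm~\ref{alg:eadmm}:
\[
(\mathbf V^k,\mathbf U^k,\bm\theta_L^k,\bm\theta_S^k;\mathbf D^k)\to(\mathbf V^{k+1},\cdot)\to(\cdot,\mathbf U^{k+1},\cdot)\to(\cdot,\bm\theta_L^{k+1},\cdot)\to(\cdot,\bm\theta_S^{k+1},\cdot)\to(\cdot;\mathbf D^{k+1}).
\]
I would bound the change of $\mathcal L_{\rho_L,\rho_S}$ across each step separately. Then I would add the Lyapunov corrections $\frac{\delta_V}{2}(\|\mathbf V^{k+1}-\mathbf V^k\|_F^2-\|\mathbf V^k-\mathbf V^{k-1}\|_F^2)$ and the analogous $\mathbf U$ term, and check that Assumption~\ref{ass:param} makes every quadratic coefficient negative.

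\emph{$\mathbf V$-step.} The $\mathbf V$-subproblem is $\tau_V$-strongly convex, since $\frac{\rho_L}{2}\|\cdot\|_F^2$ is added to a convex function. Comparing its value at $\mathbf V^{k+1}$ (the minimizer) and at $\mathbf V^k$ gives
\[
\Phi_k(\mathbf V^{k+1})\le\Phi_k(\mathbf V^k)-\tfrac{\tau_V}{2}\|\mathbf V^{k+1}-\mathbf V^k\|_F^2 .
\]
This subproblem uses $f(\cdot,\bar{\mathbf U}^k)$ instead of $f(\cdot,\mathbf U^k)$. I would therefore rewrite the difference $f(\mathbf V^{k+1},\mathbf U^k)-f(\mathbf V^k,\mathbf U^k)$ minus the same difference with $\bar{\mathbf U}^k$ as an inner product of gradient differences. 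Using $L_f$-Lipschitzness of $\nabla f$, this is at most $\beta L_f\|\mathbf U^k-\mathbf U^{k-1}\|_F\|\mathbf V^{k+1}-\mathbf V^k\|_F$. Young's inequality with $\eta_2$ splits it into $\frac{\beta L_f\eta_2}{2}\|\Delta\mathbf V\|^2+\frac{\beta L_f}{2\eta_2}\|\Delta\mathbf U^{k}\|^2$.

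\emph{$\mathbf U$-step.} The same argument applies with $\tau_U$ and the extrapolated $\bar{\mathbf V}^{k+1}$. It yields a cross term $\alpha L_f\|\mathbf V^{k+1}-\mathbf V^k\|_F\|\mathbf U^{k+1}-\mathbf U^k\|_F$, which I would split with $\eta_1$. There are also second-order remainders from the descent lemma: $\|\bar{\mathbf U}^k-\mathbf U^k\|^2=\beta^2\|\Delta\mathbf U^k\|^2$ and $\|\bar{\mathbf V}^{k+1}-\mathbf V^{k+1}\|^2=\alpha^2\|\Delta\mathbf V\|^2$, with constant $2L_f$. Collecting everything should give exactly $c_1$ on $\|\Delta\mathbf V^{k+1}\|^2$, $c_2$ on $\|\Delta\mathbf U^{k+1}\|^2$, and $c_3$ on $\|\Delta\mathbf U^k\|^2$. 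The factor $2$ in $2c_i$ is absorbed when matching with $\frac{\tau}{2}$.

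\emph{Network and dual steps.} The $\bm\theta_L$- and $\bm\theta_S$-steps change $\mathcal L$ only through $\frac{\rho_L}{2}\|\mathbf V^{k+1}-\mathbf L(\bm\theta_L,\mathbf z_L)+\mathbf D_L^k\|_F^2$ and its $\mathbf S$ analogue. Assumption~\ref{ass3:inexact} states precisely that these increase by at most $\varepsilon_k^L$ and $\varepsilon_k^S$. For the dual step, expand $\frac{\rho}{2}\|r+\mathbf D\|^2-\frac{\rho}{2}\|\mathbf D\|^2=\rho\langle r,\mathbf D\rangle+\frac{\rho}{2}\|r\|^2$ with $r^{k+1}=\mathbf D^{k+1}-\mathbf D^k$. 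The change of $\mathcal L$ is then $\rho_L\|\mathbf D_L^{k+1}-\mathbf D_L^k\|_F^2+\rho_S\|\mathbf D_S^{k+1}-\mathbf D_S^k\|_F^2$. Summing all steps and using the Lyapunov terms gives coefficients
\[
-(\tfrac{\tau_V}{2}-c_1-\tfrac{\delta_V}{2})\ \text{on}\ \|\Delta\mathbf V^{k+1}\|^2,\qquad -(\tfrac{\tau_U}{2}-c_2-\tfrac{\delta_U}{2})\ \text{on}\ \|\Delta\mathbf U^{k+1}\|^2,\qquad -(\tfrac{\delta_U}{2}-c_3)\ \text{on}\ \|\Delta\mathbf U^{k}\|^2 .
\]
The first two are negative by Assumption~\ref{ass:param}, which defines $\xi_V,\xi_U$. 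The third is nonpositive, so that term can be dropped, and there is no $\|\Delta\mathbf V^k\|^2$ term to control. Finally I would set $\gamma_k:=\varepsilon_k^L+\varepsilon_k^S+\rho_L\|\Delta\mathbf D_L^{k+1}\|_F^2+\rho_S\|\Delta\mathbf D_S^{k+1}\|_F^2$.

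\emph{Main obstacle.} The hard part is showing that the dual increments are summable. In classical nonconvex ADMM one bounds $\|\Delta\mathbf D\|$ by primal increments through the optimality condition of the last primal block. Here the $\bm\theta$-steps are generic inexact network fits with no such relation. My first attempt would use the $\bm\theta$-step residual bound together with boundedness from Assumption~\ref{ass2:Abounded} to write $\|\Delta\mathbf D_L^{k+1}\|$ in terms of $\|\mathbf V^{k+1}-\mathbf L^{k+1}\|$. I expect this may only close with an additional summability requirement, implicit in the choice of $\varepsilon_k$, on the constraint residuals $\mathbf V^{k+1}-\mathbf L^{k+1}$ and $\mathbf U^{k+1}-\mathbf S^{k+1}$. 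That requirement would be used to put the dual terms into $\gamma_k$.
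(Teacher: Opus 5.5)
Your block-by-block accounting is the same as the paper's. It uses optimality plus strong convexity of the $\mathbf V$- and $\mathbf U$-subproblems, Lipschitz-gradient and Young bounds on the extrapolation errors (giving at most $c_1,c_2,c_3$), Assumption~\ref{ass3:inexact} for the network steps, and Assumption~\ref{ass:param} for the Lyapunov terms. All of that is fine. Your dual-step computation is also correct, and it is exactly where the paper's proof slips. With $\mathbf r_L:=\mathbf V^{k+1}-\mathbf L^{k+1}=\mathbf D_L^{k+1}-\mathbf D_L^k$,
\[
\Bigl(\tfrac{\rho_L}{2}\|\mathbf r_L+\mathbf D_L^{k+1}\|_F^2-\tfrac{\rho_L}{2}\|\mathbf D_L^{k+1}\|_F^2\Bigr)-\Bigl(\tfrac{\rho_L}{2}\|\mathbf r_L+\mathbf D_L^{k}\|_F^2-\tfrac{\rho_L}{2}\|\mathbf D_L^{k}\|_F^2\Bigr)=\rho_L\|\mathbf D_L^{k+1}-\mathbf D_L^k\|_F^2 .
\]
The first bracket is what $\mathcal L^{k+1}$ contains. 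The second is what the left side of \eqref{eq:before-dual} contains after subtracting $\frac{\rho_L}{2}\|\mathbf D_L^k\|_F^2$. The (true) identity \eqref{eq:Dl-identity} does not turn one into the other. So the passage to \eqref{eq:AL-descent} silently drops $\rho_L\|\mathbf D_L^{k+1}-\mathbf D_L^k\|_F^2+\rho_S\|\mathbf D_S^{k+1}-\mathbf D_S^k\|_F^2$, and the paper's choice $\gamma_k=\varepsilon_k^L+\varepsilon_k^S$ is not justified.

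Consequently, your proposal does not prove the lemma as stated either. The conclusion needs $\sum_k\gamma_k<\infty$, i.e.\ $\sum_k\|\mathbf D_L^{k+1}-\mathbf D_L^k\|_F^2<\infty$ and likewise for $\mathbf D_S$, and nothing in the three assumptions gives this.

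\begin{itemize}
\item The route you sketch is circular. By the dual update, $\mathbf D_L^{k+1}-\mathbf D_L^k=\mathbf V^{k+1}-\mathbf L^{k+1}$ identically, and boundedness only yields $\sup_k\|\mathbf D_L^{k+1}-\mathbf D_L^k\|_F<\infty$.
\item The requirement cannot be ``implicit in the choice of $\varepsilon_k$''. The $\bm\theta_L$-objective at $\bm\theta_L^{k+1}$ equals $\frac{\rho_L}{2}\|\mathbf D_L^{k+1}\|_F^2$, so Assumption~\ref{ass3:inexact} reads
\[
\tfrac{\rho_L}{2}\|\mathbf D_L^{k+1}\|_F^2\le\tfrac{\rho_L}{2}\|\mathbf V^{k+1}-\mathbf L^k+\mathbf D_L^k\|_F^2+\varepsilon_k^L .
\]
This bounds the size of the new multiplier, not its increment.
\item The usual nonconvex-ADMM device is unavailable. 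That device bounds the multiplier increment by primal increments through the optimality condition of a smooth last block. Here the $\bm\theta$-subproblems carry no objective besides the penalty; their stationarity only says the adjoint network Jacobian annihilates $\mathbf D_L^{k+1}$. The $\mathbf V$/$\mathbf U$ optimality conditions involve nuclear-norm and $\ell_1$ subgradients, which are not Lipschitz in the iterates.
\end{itemize}

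What you can actually prove is the inequality with
\[
\gamma_k:=\varepsilon_k^L+\varepsilon_k^S+\rho_L\|\mathbf D_L^{k+1}-\mathbf D_L^k\|_F^2+\rho_S\|\mathbf D_S^{k+1}-\mathbf D_S^k\|_F^2 .
\]
Summability then needs an explicitly stated extra hypothesis, such as $\sum_k\bigl(\|\mathbf V^{k+1}-\mathbf L^{k+1}\|_F^2+\|\mathbf U^{k+1}-\mathbf S^{k+1}\|_F^2\bigr)<\infty$. That hypothesis must also be carried into the subsequent square-summability lemma and the convergence theorem, which rely on $\sum_k\gamma_k<\infty$.
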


To prove Lemma \ref{lem:descent}, we first recall the descent property of a smooth function in the following lemma.
\begin{lemma}\label{descent property}{\rm\cite{bolte2014proximal}}
 Let $f:~\mathbb{R}^{n}\rightarrow \mathbb{R}$ be a continuously differentiable function with gradient $\nabla f$ assumed $L_f$-Lipschitz continuous. Then, for any $u,v\in \mathbb{R}^{n}$, we have
 \begin{equation*}
 \left|f(u)-f(v)-\langle u-v,\nabla f(v)\rangle\right|\leq \frac{L_f}{2}\|u-v\|^2.\end{equation*}
 \end{lemma}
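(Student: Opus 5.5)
The plan is to prove the estimate directly from the fundamental theorem of calculus along the segment joining $v$ to $u$, and then use the Lipschitz continuity of $\nabla f$. Fix $u,v\in\mathbb{R}^n$ and define the scalar function $\varphi(t):=f(v+t(u-v))$ for $t\in[0,1]$. Since $f$ is continuously differentiable, $\varphi$ is $C^1$ on $[0,1]$ with $\varphi'(t)=\langle u-v,\nabla f(v+t(u-v))\rangle$ by the chain rule. Then
\begin{equation*}
f(u)-f(v)=\varphi(1)-\varphi(0)=\int_0^1\langle u-v,\nabla f(v+t(u-v))\rangle\,dt .
\end{equation*}

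Next, subtract the constant $\langle u-v,\nabla f(v)\rangle=\int_0^1\langle u-v,\nabla f(v)\rangle\,dt$ from both sides. This gives
\begin{equation*}
f(u)-f(v)-\langle u-v,\nabla f(v)\rangle=\int_0^1\langle u-v,\nabla f(v+t(u-v))-\nabla f(v)\rangle\,dt .
\end{equation*}

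Take absolute values, move them inside the integral, and apply the Cauchy--Schwarz inequality. The $L_f$-Lipschitz bound $\|\nabla f(v+t(u-v))-\nabla f(v)\|\le L_f t\|u-v\|$ then bounds the integrand by $L_f t\|u-v\|^2$. Integrating, $\int_0^1 t\,dt=\tfrac12$ yields the claimed bound $\frac{L_f}{2}\|u-v\|^2$.

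There is no real obstacle. The only points needing care are that the integral representation is legitimate, which holds because $\varphi'$ is continuous, and that the bound is two-sided. The two-sidedness is automatic because we estimate the absolute value of the integral rather than just an upper bound.

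In the present application, $f(\mathbf V,\mathbf U)$ is quadratic, so its gradient is $\mathbf A^H\mathbf A(\mathbf U+\mathbf V)-\mathbf A^H\mathbf Y$ in each block. Hence $\nabla f$ is Lipschitz with $L_f$ bounded by $2\|\mathbf A\|^2$, which is finite by Assumption~\ref{ass2:Abounded}. The identity can even be checked exactly via the second-order Taylor expansion, but the general argument above suffices.
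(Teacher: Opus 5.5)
Your proof is correct. The paper gives no argument of its own for this lemma and simply cites \cite{bolte2014proximal}; your route (integral representation along the segment from $v$ to $u$, then Cauchy--Schwarz and $\|\nabla f(v+t(u-v))-\nabla f(v)\|\le L_f t\|u-v\|$) is the standard proof of this classical descent lemma. Your closing remark about the application with $L_f\le 2\|\mathbf A\|^2$ also holds, provided the complex-valued variables are treated as real vectors under the inner product $\mathrm{Re}\langle\cdot,\cdot\rangle$.
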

\begin{proof}
Using the optimality of the updating schemes of $\mathbf V^{k+1}$ and $\mathbf U^{k+1}$ in Algorithm \ref{alg:eadmm}, we have 
\begin{align}
&f(\mathbf V^{k+1},\bar{\mathbf U}^k)+\lambda_L\|\mathbf V^{k+1}\|_*
+\frac{\rho_L}{2}\|\mathbf V^{k+1}-\mathbf L^k+\mathbf D_L^k\|_F^2
\notag\\
\le\;&
f(\mathbf V^k,\bar{\mathbf U}^k)+\lambda_L\|\mathbf V^k\|_*
+\frac{\rho_L}{2}\|\mathbf V^k-\mathbf L^k+\mathbf D_L^k\|_F^2.
\label{eq:V-descent}
\end{align}
and
\begin{align}
&f(\bar{\mathbf V}^{k+1},\mathbf U^{k+1})+\lambda_S\|\mathbf W\mathbf U^{k+1}\|_1
+\frac{\rho_S}{2}\|\mathbf U^{k+1}-\mathbf S^k+\mathbf D_S^k\|_F^2
\notag\\
\le\;&
f(\bar{\mathbf V}^{k+1},\mathbf U^k)+\lambda_S\|\mathbf W\mathbf U^k\|_1
+\frac{\rho_S}{2}\|\mathbf U^k-\mathbf S^k+\mathbf D_S^k\|_F^2.
\label{eq:U-descent}
\end{align}
Summing \eqref{eq:V-descent} and \eqref{eq:U-descent}, and adding and subtracting the $f(\mathbf V^{k+1},\mathbf U^{k+1})$ on the left-hand side, and
$f(\mathbf V^k,\mathbf U^k)$ on the right-hand side, we obtain 

\begin{align}
&f(\mathbf V^{k+1},\mathbf U^{k+1})+f(\mathbf V^{k+1},\bar{\mathbf U}^k)-f(\mathbf V^{k+1},\mathbf U^{k+1})
+\lambda_L\|\mathbf V^{k+1}\|_*
+\frac{\rho_L}{2}\|\mathbf V^{k+1}-\mathbf L^k+\mathbf D_L^k\|_F^2
\notag\\
&\quad
+f(\bar{\mathbf V}^{k+1},\mathbf U^{k+1})
+\lambda_S\|\mathbf W\mathbf U^{k+1}\|_1
+\frac{\rho_S}{2}\|\mathbf U^{k+1}-\mathbf S^k+\mathbf D_S^k\|_F^2
\notag\\
\le\;&
f(\mathbf V^k,\mathbf U^k)+
f(\mathbf V^{k},\bar{\mathbf U}^k)-f(\mathbf V^k,\mathbf U^k)
+\lambda_L\|\mathbf V^{k}\|_*
+\frac{\rho_L}{2}\|\mathbf V^{k}-\mathbf L^k+\mathbf D_L^k\|_F^2
\notag\\
&\quad
+f(\bar{\mathbf V}^{k+1},\mathbf U^{k})
+\lambda_S\|\mathbf W\mathbf U^{k}\|_1
+\frac{\rho_S}{2}\|\mathbf U^{k}-\mathbf S^k+\mathbf D_S^k\|_F^2 .
\end{align}
Rearrange this, we obtain
\begin{align}
&f(\mathbf V^{k+1},\mathbf U^{k+1})
+\lambda_L\|\mathbf V^{k+1}\|_*+\lambda_S\|\mathbf W\mathbf U^{k+1}\|_1
\notag\\
&\quad
+\frac{\rho_L}{2}\|\mathbf V^{k+1}-\mathbf L^k+\mathbf D_L^k\|_F^2+\frac{\rho_S}{2}\|\mathbf U^{k+1}-\mathbf S^k+\mathbf D_S^k\|_F^2
\notag\\
\le\;&
f(\mathbf V^k,\mathbf U^k)
+\lambda_L\|\mathbf V^{k}\|_*+\lambda_S\|\mathbf W\mathbf U^{k}\|_1
\notag\\
&\quad
+\frac{\rho_L}{2}\|\mathbf V^{k}-\mathbf L^k+\mathbf D_L^k\|_F^2
+\frac{\rho_S}{2}\|\mathbf U^{k}-\mathbf S^k+\mathbf D_S^k\|_F^2\\&\quad\nonumber
+f(\bar{\mathbf V}^{k+1},\mathbf U^{k})-f(\bar{\mathbf V}^{k+1},\mathbf U^{k+1})\\&\quad\nonumber
+f(\mathbf V^{k},\bar{\mathbf U}^k)-f(\mathbf V^k,\mathbf U^k)
+f(\mathbf V^{k+1},\mathbf U^{k+1})-f(\mathbf V^{k+1},\bar{\mathbf U}^k).
\label{eq:sum-VU}
\end{align}
Since $\bar{\mathbf{U}}^k = \mathbf{U}^k + \beta(\mathbf{U}^k-\mathbf{U}^{k-1})$ and $\bar{\mathbf{V}}^{k+1}=\mathbf{V}^{k+1} + \alpha(\mathbf{V}^{k+1}-\mathbf{V}^{k})$, we know that
\begin{equation}
\begin{aligned}
&f(\mathbf V^{k+1},\mathbf U^{k+1})
+\lambda_L\|\mathbf V^{k+1}\|_*+\lambda_S\|\mathbf W\mathbf U^{k+1}\|_1
\\
&\quad
+\frac{\rho_L}{2}\|\mathbf V^{k+1}-\mathbf L^k+\mathbf D_L^k\|_F^2+\frac{\rho_S}{2}\|\mathbf U^{k+1}-\mathbf S^k+\mathbf D_S^k\|_F^2
\\
\le\;&
f(\mathbf V^k,\mathbf U^k)
+\lambda_L\|\mathbf V^{k}\|_*+\lambda_S\|\mathbf W\mathbf U^{k}\|_1
\quad\\
&\quad
+\frac{\rho_L}{2}\|\mathbf V^{k}-\mathbf L^k+\mathbf D_L^k\|_F^2
+\frac{\rho_S}{2}\|\mathbf U^{k}-\mathbf S^k+\mathbf D_S^k\|_F^2+R_k
\label{eq:sum-VU}
\end{aligned}
\end{equation}
where $R_k=
f(\bar{\mathbf V}^{k+1},\mathbf U^{k})
-f(\mathbf V^{k+1},\bar{\mathbf U}^k)
+f(\mathbf V^{k},\bar{\mathbf U}^k)
-f(\mathbf V^k,\mathbf U^k)
+f(\mathbf V^{k+1},\mathbf U^{k+1})
-f(\bar{\mathbf V}^{k+1},\mathbf U^{k+1})
$. 
By Lemma \ref{descent property}, there exists a constant $L_f>0$ such that
\begin{equation}
\begin{aligned}
    R_k&=
f(\bar{\mathbf V}^{k+1},\mathbf U^{k})-f(\mathbf V^{k+1},\mathbf U^k)
+f(\mathbf V^{k+1},\mathbf U^k)-f(\mathbf V^{k+1},\bar{\mathbf U}^k)\\
&\quad+f(\mathbf V^{k},\bar{\mathbf U}^k)
-f(\mathbf V^k,\mathbf U^k)
+f(\mathbf V^{k+1},\mathbf U^{k+1})
-f(\bar{\mathbf V}^{k+1},\mathbf U^{k+1})\\
&\leq\left\langle \nabla_{\mathbf V} f(\mathbf V^{k+1},\mathbf U^k),\,
\bar{\mathbf V}^{k+1}-\mathbf V^{k+1}\right\rangle
+\frac{L_f}{2}\|\bar{\mathbf V}^{k+1}-\mathbf V^{k+1}\|_F^2\\
&\quad+\left\langle \nabla_{\mathbf U} f(\mathbf V^{k+1},\mathbf U^k),\,\mathbf U^{k}-\bar{\mathbf U}^{k}\right\rangle
+\frac{L_f}{2}\|\mathbf U^{k}-\bar{\mathbf U}^{k}\|_F^2
\\
&\quad+\left\langle \nabla_{\mathbf U} f(\mathbf V^{k},\mathbf U^k),\,
\bar{\mathbf U}^{k}-\mathbf U^{k}\right\rangle
+\frac{L_f}{2}\|\bar{\mathbf U}^{k}-\mathbf U^{k}\|_F^2 
\\
&\quad+\left\langle \nabla_{\mathbf V} f(\bar{\mathbf V}^{k+1},\mathbf U^{k+1}),\,
\mathbf V^{k+1}-\bar{\mathbf V}^{k+1}\right\rangle
+\frac{L_f}{2}\|\mathbf V^{k+1}-\bar{\mathbf V}^{k+1}\|_F^2.
\end{aligned}
\end{equation}
Since the extrapolated variables $\bar{\mathbf{U}}^k = \mathbf{U}^k + \beta(\mathbf{U}^k-\mathbf{U}^{k-1})$ and $\bar{\mathbf{V}}^{k+1}=\mathbf{V}^{k+1} + \alpha(\mathbf{V}^{k+1}-\mathbf{V}^{k})$, we have 
\begin{equation}
\begin{aligned}
R_k
=&\;
f(\bar{\mathbf V}^{k+1},\mathbf U^{k})-f(\mathbf V^{k+1},\mathbf U^k)
+f(\mathbf V^{k+1},\mathbf U^k)-f(\mathbf V^{k+1},\bar{\mathbf U}^k) \\
&\quad
+f(\mathbf V^{k},\bar{\mathbf U}^k)-f(\mathbf V^k,\mathbf U^k)
+f(\mathbf V^{k+1},\mathbf U^{k+1})-f(\bar{\mathbf V}^{k+1},\mathbf U^{k+1})\\
\leq &\;
\left\langle \nabla_{\mathbf V} f(\mathbf V^{k+1},\mathbf U^k),\,
 \alpha(\mathbf{V}^{k+1}-\mathbf{V}^{k})\right\rangle
+\frac{\alpha^2 L_f}{2}\|\mathbf{V}^{k+1}-\mathbf{V}^{k}\|_F^2\\
&\quad
+\left\langle \nabla_{\mathbf U} f(\mathbf V^{k+1},\bar{\mathbf U}^k),\,- \beta(\mathbf{U}^k-\mathbf{U}^{k-1})\right\rangle
+\frac{\beta^2 L_f}{2}\|\mathbf{U}^k-\mathbf{U}^{k-1}\|_F^2
\\
&\quad
+\left\langle \nabla_{\mathbf U} f(\mathbf V^{k},\mathbf U^k),\,
\beta(\mathbf{U}^k-\mathbf{U}^{k-1})\right\rangle
+\frac{\beta^2 L_f}{2}\|\mathbf{U}^k-\mathbf{U}^{k-1}\|_F^2
\\
&\quad
+\left\langle \nabla_{\mathbf V} f(\bar{\mathbf V}^{k+1},\mathbf U^{k+1}),\,
 - \alpha(\mathbf{V}^{k+1}-\mathbf{V}^{k})\right\rangle
+\frac{\alpha^2 L_f}{2}\| \mathbf{V}^{k+1}-\mathbf{V}^{k}\|_F^2.
\end{aligned}
\end{equation}
Hence, we obtain
\begin{equation}
\begin{aligned}
R_k
=&\;
f(\bar{\mathbf V}^{k+1},\mathbf U^{k})-f(\mathbf V^{k+1},\mathbf U^k)
+f(\mathbf V^{k+1},\mathbf U^k)-f(\mathbf V^{k+1},\bar{\mathbf U}^k)\\
&\quad
+f(\mathbf V^{k},\bar{\mathbf U}^k)-f(\mathbf V^k,\mathbf U^k)
+f(\mathbf V^{k+1},\mathbf U^{k+1})-f(\bar{\mathbf V}^{k+1},\mathbf U^{k+1})\\
\leq &\;
\alpha\left\langle \nabla_{\mathbf V} f(\mathbf V^{k+1},\mathbf U^k)-\nabla_{\mathbf V} f(\bar{\mathbf V}^{k+1},\mathbf U^{k+1}),\,
 \mathbf{V}^{k+1}-\mathbf{V}^{k}\right\rangle
\\
&\quad
+\beta\left\langle \nabla_{\mathbf U} f(\mathbf V^{k},\mathbf U^k)-\nabla_{\mathbf U} f(\mathbf V^{k+1},\bar{\mathbf U}^k),\, \mathbf{U}^k-\mathbf{U}^{k-1}\right\rangle
\\
&\quad
+{\alpha^2 L_f}\|\mathbf{V}^{k+1}-\mathbf{V}^{k}\|_F^2
+{\beta^2 L_f}\|\mathbf{U}^k-\mathbf{U}^{k-1}\|_F^2. 
\end{aligned}
\end{equation}
Using the Lipschitz continuity of $\nabla f$ on bounded sets, we have
\begin{equation}
\begin{aligned}
&\left\|
\nabla_{\mathbf V} f(\mathbf V^{k+1},\mathbf U^k)
-\nabla_{\mathbf V} f(\bar{\mathbf V}^{k+1},\mathbf U^{k+1})
\right\|
\\
&\le
L_f\left(
\|\mathbf V^{k+1}-\bar{\mathbf V}^{k+1}\|_F
+\|\mathbf U^k-\mathbf U^{k+1}\|_F
\right)
\\
&=
L_f\left(
\alpha\|\mathbf V^{k+1}-\mathbf V^k\|_F
+\|\mathbf U^{k+1}-\mathbf U^k\|_F
\right),
\end{aligned}
\end{equation}
and
\begin{equation}
\begin{aligned}
&\left\|
\nabla_{\mathbf U} f(\mathbf V^{k},\mathbf U^k)
-\nabla_{\mathbf U} f(\mathbf V^{k+1},\bar{\mathbf U}^k)
\right\|
\\
&\le
L_f\left(
\|\mathbf V^{k}-\mathbf V^{k+1}\|_F
+\|\mathbf U^k-\bar{\mathbf U}^k\|_F
\right)
\\
&=
L_f\left(
\|\mathbf V^{k+1}-\mathbf V^k\|_F
+\beta\|\mathbf U^{k}-\mathbf U^{k-1}\|_F
\right).
\end{aligned}
\end{equation}
From Cauchy-Schwarz inequality ($\langle a, b\rangle\le\Vert a\Vert\Vert b\Vert$), we have 
\begin{equation}
\begin{aligned}
&\alpha \Big\langle 
\nabla_{\mathbf V} f(\mathbf V^{k+1},\mathbf U^k)
-\nabla_{\mathbf V} f(\bar{\mathbf V}^{k+1},\mathbf U^{k+1}),
\,\mathbf V^{k+1}-\mathbf V^k
\Big\rangle 
\\
&\le
\alpha
\Big\|
\nabla_{\mathbf V} f(\mathbf V^{k+1},\mathbf U^k)
-\nabla_{\mathbf V} f(\bar{\mathbf V}^{k+1},\mathbf U^{k+1})
\Big\|
\cdot
\|\mathbf V^{k+1}-\mathbf V^k\|_F
\\
&\le
\alpha L_f
\Big(
\alpha\|\mathbf V^{k+1}-\mathbf V^k\|_F
+\|\mathbf U^{k+1}-\mathbf U^k\|_F
\Big)
\|\mathbf V^{k+1}-\mathbf V^k\|_F
\\
&=
\alpha^2L_f\|\mathbf V^{k+1}-\mathbf V^k\|_F^2
+\alpha L_f\|\mathbf V^{k+1}-\mathbf V^k\|_F\|\mathbf U^{k+1}-\mathbf U^k\|_F.
\end{aligned}
\end{equation}
and then
\begin{equation}
\begin{aligned}
R_k
\leq &\;
\alpha L_f\left(
\alpha\|\mathbf V^{k+1}-\mathbf V^k\|_F
+\|\mathbf U^{k+1}-\mathbf U^k\|_F
\right)\|\mathbf V^{k+1}-\mathbf V^k\|_F
\\
&\quad
+\beta L_f\left(
\|\mathbf V^{k+1}-\mathbf V^k\|_F
+\beta\|\mathbf U^{k}-\mathbf U^{k-1}\|_F
\right)\|\mathbf U^k-\mathbf U^{k-1}\|_F
\\
&\quad
+\alpha^2 L_f\|\mathbf V^{k+1}-\mathbf V^k\|_F^2
+\beta^2 L_f\|\mathbf U^k-\mathbf U^{k-1}\|_F^2\\
=&
2\alpha^2 L_f\|\mathbf V^{k+1}-\mathbf V^k\|_F^2
+\alpha L_f\|\mathbf V^{k+1}-\mathbf V^k\|_F\|\mathbf U^{k+1}-\mathbf U^k\|_F
\\
&\quad
+\beta L_f\|\mathbf V^{k+1}-\mathbf V^k\|_F\|\mathbf U^{k}-\mathbf U^{k-1}\|_F
+2\beta^2 L_f\|\mathbf U^{k}-\mathbf U^{k-1}\|_F^2.
\end{aligned}
\end{equation}
Applying Young's inequality ($ab<\frac{\eta}{2}a^2+\frac{1}{2\eta}b^2$), for any $\eta_1,\eta_2>0$, we obtain
\begin{equation}
\begin{aligned}
R_k
\leq &\;
\left(2\alpha^2L_f+\frac{\alpha L_f}{2}\eta_1+\frac{\beta L_f}{2}\eta_2\right)
\|\mathbf V^{k+1}-\mathbf V^k\|_F^2
\\
&\quad
+\frac{\alpha L_f}{2\eta_1}\|\mathbf U^{k+1}-\mathbf U^k\|_F^2
+\left(2\beta^2L_f+\frac{\beta L_f}{2\eta_2}\right)
\|\mathbf U^{k}-\mathbf U^{k-1}\|_F^2.
\end{aligned}
\end{equation}
Substituting the above bound on $R_k$ into \eqref{eq:sum-VU}, we obtain
\begin{equation}
\begin{aligned}
&f(\mathbf V^{k+1},\mathbf U^{k+1})
+\lambda_L\|\mathbf V^{k+1}\|_*+\lambda_S\|\mathbf W\mathbf U^{k+1}\|_1
\\
&\quad
+\frac{\rho_L}{2}\|\mathbf V^{k+1}-\mathbf L^k+\mathbf D_L^k\|_F^2+\frac{\rho_S}{2}\|\mathbf U^{k+1}-\mathbf S^k+\mathbf D_S^k\|_F^2
\\
\le\;&
f(\mathbf V^k,\mathbf U^k)
+\lambda_L\|\mathbf V^{k}\|_*+\lambda_S\|\mathbf W\mathbf U^{k}\|_1
\quad\\
&\quad
+\frac{\rho_L}{2}\|\mathbf V^{k}-\mathbf L^k+\mathbf D_L^k\|_F^2
+\frac{\rho_S}{2}\|\mathbf U^{k}-\mathbf S^k+\mathbf D_S^k\|_F^2
\\
&\quad
+c_1\|\mathbf V^{k+1}-\mathbf V^k\|_F^2
+c_2\|\mathbf U^{k+1}-\mathbf U^k\|_F^2
+c_3\|\mathbf U^{k}-\mathbf U^{k-1}\|_F^2,
\label{eq:pre-deep}
\end{aligned}
\end{equation}
where $c_1:= 2\alpha^2L_f+\frac{\alpha L_f}{2}\eta_1+\frac{\beta L_f}{2}\eta_2$, $c_2:=\frac{\alpha L_f}{2\eta_1}$, and $c_3:=2\beta^2L_f+\frac{\beta L_f}{2\eta_2}$. 

On the other hand, denote $\textbf{L}^{k}:=\mathbf L(\bm{\theta}_L^{k},\textbf{z}_L)$ and $\textbf{S}^k:=\mathbf S(\bm{\theta}_S^{k},\textbf{z}_S)$, by Assumption~\ref{ass3:inexact}, the updates of the $\textbf{L}$ and $\textbf{S}$ subproblems satisfy
\begin{equation}
\begin{aligned}
&\frac{\rho_L}{2}\|\mathbf V^{k+1}-\mathbf L^{k+1}+\mathbf D_L^k\|_F^2
\le
\frac{\rho_L}{2}\|\mathbf V^{k+1}-\mathbf L^{k}+\mathbf D_L^k\|_F^2
+\varepsilon_k^L,
\\
&\frac{\rho_S}{2}\|\mathbf U^{k+1}-\mathbf S^{k+1}+\mathbf D_S^k\|_F^2
\le\
\frac{\rho_S}{2}\|\mathbf U^{k+1}-\mathbf S^{k}+\mathbf D_S^k\|_F^2
+\varepsilon_k^S.
\label{eq:deep-S-descent}
\end{aligned}
\end{equation}
Combining \eqref{eq:pre-deep} with \eqref{eq:deep-S-descent}, we obtain
\begin{equation}
\begin{aligned}
&f(\mathbf V^{k+1},\mathbf U^{k+1})
+\lambda_L\|\mathbf V^{k+1}\|_*
+\lambda_S\|\mathbf W\mathbf U^{k+1}\|_1
\\
&\quad
+\frac{\rho_L}{2}\|\mathbf V^{k+1}-\mathbf L^{k+1}+\mathbf D_L^k\|_F^2
+\frac{\rho_S}{2}\|\mathbf U^{k+1}-\mathbf S^{k+1}+\mathbf D_S^k\|_F^2
\\
\le\;&
f(\mathbf V^k,\mathbf U^k)
+\lambda_L\|\mathbf V^{k}\|_*
+\lambda_S\|\mathbf W\mathbf U^{k}\|_1
\\
&\quad
+\frac{\rho_L}{2}\|\mathbf V^{k}-\mathbf L^k+\mathbf D_L^k\|_F^2
+\frac{\rho_S}{2}\|\mathbf U^{k}-\mathbf S^k+\mathbf D_S^k\|_F^2
\\
&\quad
+c_1\|\mathbf V^{k+1}-\mathbf V^k\|_F^2
+c_2\|\mathbf U^{k+1}-\mathbf U^k\|_F^2
+c_3\|\mathbf U^{k}-\mathbf U^{k-1}\|_F^2
+\varepsilon_k^L+\varepsilon_k^S.
\end{aligned}
\label{eq:before-dual}
\end{equation}
Since 
\begin{equation}
\mathbf D_L^{k+1}=\mathbf D_L^k+\mathbf V^{k+1}-\mathbf L^{k+1},
\qquad
\mathbf D_S^{k+1}=\mathbf D_S^k+\mathbf U^{k+1}-\mathbf S^{k+1},
\label{eq:dual-update}
\end{equation}
we have
\begin{equation}
\frac{\rho_L}{2}\|\mathbf V^{k+1}-\mathbf L^{k+1}+\mathbf D_L^k\|_F^2
-\frac{\rho_L}{2}\|\mathbf D_L^k\|_F^2
=
\frac{\rho_L}{2}\|\mathbf D_L^{k+1}\|_F^2
-\frac{\rho_L}{2}\|\mathbf D_L^k\|_F^2,
\label{eq:Dl-identity}
\end{equation}
and similarly,
\begin{equation}
\frac{\rho_S}{2}\|\mathbf U^{k+1}-\mathbf S^{k+1}+\mathbf D_S^k\|_F^2
-\frac{\rho_S}{2}\|\mathbf D_S^k\|_F^2
=
\frac{\rho_S}{2}\|\mathbf D_S^{k+1}\|_F^2
-\frac{\rho_S}{2}\|\mathbf D_S^k\|_F^2.
\label{eq:Ds-identity}
\end{equation}
Recalling the definition of the augmented Lagrangian \eqref{eq:aug_lag}, from \eqref{eq:before-dual} and \eqref{eq:Ds-identity} we obtain
\begin{equation}
\begin{aligned}
&
\mathcal L^{k+1}
\le\;\mathcal L^{k}
+c_1\|\mathbf V^{k+1}-\mathbf V^k\|_F^2
+c_2\|\mathbf U^{k+1}-\mathbf U^k\|_F^2
+c_3\|\mathbf U^{k}-\mathbf U^{k-1}\|_F^2
+\varepsilon_k^L+\varepsilon_k^S.
\end{aligned}
\label{eq:AL-descent}
\end{equation}
Since the $\mathbf V$- and $\mathbf U$-subproblems are strongly convex with moduli $\tau_V$ and $\tau_U$, respectively, 
we know that 
\begin{equation}\label{eq:AL-descent2}
\begin{aligned}
&
\mathcal L^{k+1}
\le\;
\mathcal L^{k}
-\Big(\frac{\tau_V}{2}-c_1\Big)\|\mathbf V^{k+1}-\mathbf V^k\|_F^2
-\Big(\frac{\tau_U}{2}-c_2\Big)\|\mathbf U^{k+1}-\mathbf U^k\|_F^2
\\
&\quad\qquad
+c_3\|\mathbf U^{k}-\mathbf U^{k-1}\|_F^2
+\varepsilon_k^L+\varepsilon_k^S.
\end{aligned}
\end{equation}
Adding $\frac{\delta_V}{2}\|\mathbf V^{k+1}-\mathbf V^k\|_F^2+\frac{\delta_U}{2}\|\mathbf U^{k+1}-\mathbf U^k\|_F^2$ to both sides of \eqref{eq:AL-descent2}, and subtracting $\frac{\delta_V}{2}\|\mathbf V^{k}-\mathbf V^{k-1}\|_F^2 +\frac{\delta_U}{2}\|\mathbf U^{k}-\mathbf U^{k-1}\|_F^2$,
we obtain
\begin{equation}
\begin{aligned}
\Psi^{k+1}-\Psi^k
\le\;&
-\left(\frac{\tau_V}{2}-\frac{\delta_V}{2}-c_1\right)\|\mathbf V^{k+1}-\mathbf V^k\|_F^2
-\left(\frac{\tau_U}{2}-\frac{\delta_U}{2}-c_2\right)\|\mathbf U^{k+1}-\mathbf U^k\|_F^2\\
&
-\frac{\delta_V}{2}\|\mathbf V^{k}-\mathbf V^{k-1}\|_F^2
-\left(\frac{\delta_U}{2}-c_3\right)\|\mathbf U^{k}-\mathbf U^{k-1}\|_F^2
+\varepsilon_k^L+\varepsilon_k^S.
\end{aligned}
\label{eq:psi-step}
\end{equation}
Let $\gamma_k:=\varepsilon_k^L+\varepsilon_k^S$, since $\sum_{k=0}^\infty \varepsilon_k^L<\infty$ and $\sum_{k=0}^\infty \varepsilon_k^S<\infty$, we have $\sum_{k=0}^\infty \gamma_k<\infty$.
Choose $\delta_V$ and $\delta_U$ such that
$\delta_V<\tau_V-2c_1$, $2c_3<\delta_U<\tau_U-2c_2$.
Then there exist constants $\xi_V,\xi_U>0$ such that
\begin{equation}
\Psi^{k+1}-\Psi^k
\le
-\xi_V\|\mathbf V^{k+1}-\mathbf V^k\|_F^2
-\xi_U\|\mathbf U^{k+1}-\mathbf U^k\|_F^2
+\gamma_k.
\label{eq:final-descent}
\end{equation}
This proves the sufficient descent property.
\end{proof}

\begin{remark}
The sufficient descent condition requires
\[
c_1 < \frac{\tau_V-\delta_V}{2},\qquad
c_2 < \frac{\tau_U-\delta_U}{2},\qquad
c_3 < \frac{\delta_U}{2},
\]
where
\[
c_1=2\alpha^2L_f+\frac{\alpha L_f}{2}\eta_1+\frac{\beta L_f}{2}\eta_2,\quad
c_2=\frac{\alpha L_f}{2\eta_1},\quad
c_3=2\beta^2L_f+\frac{\beta L_f}{2\eta_2},
\]
strongly convex moduli $\tau_V$ and $\tau_U$ are chosen sufficiently large, and $\eta_1, \eta_2>0$.
Hence, there exists a nonempty neighborhood of $(\alpha,\beta)$ such that the sufficient descent property holds for all extrapolation parameters in this neighborhood. 
In particular, a sufficient choice is
\[
0\le \alpha<\frac{\eta_1(\tau_U-\delta_U)}{L_f},
\qquad
0\le \beta<\beta_{\max},
\]
where $\beta_{\max}$ is the positive root of $2L_f\beta^2+\frac{L_f}{2\eta_2}\beta-\frac{\delta_U}{2}=0$.
\end{remark}

As a direct consequence of Lemma~\ref{lem:descent}, we have the following lemma. 
\begin{lemma}\label{lem:asymptotic}
Suppose Assumptions~\ref{ass2:Abounded}, \ref{ass3:inexact}, and \ref{ass:param} hold. Then
\begin{equation}\label{eq:square-summable}
\sum_{k=0}^{\infty}\|\mathbf V^{k+1}-\mathbf V^k\|_F^2<\infty,
\qquad
\sum_{k=0}^{\infty}\|\mathbf U^{k+1}-\mathbf U^k\|_F^2<\infty.
\end{equation}
Consequently,
\begin{equation}\label{eq:vanishing-VU}
\|\mathbf V^{k+1}-\mathbf V^k\|_F\to0,
\qquad
\|\mathbf U^{k+1}-\mathbf U^k\|_F\to0.
\end{equation}
\end{lemma}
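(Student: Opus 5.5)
The plan is to telescope the sufficient descent inequality of Lemma~\ref{lem:descent} and to bound $\Psi^k$ from below using the boundedness in Assumption~\ref{ass2:Abounded}.

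Summing the inequality of Lemma~\ref{lem:descent} from $k=0$ to $N-1$ gives
\begin{equation*}
\xi_V\sum_{k=0}^{N-1}\|\mathbf V^{k+1}-\mathbf V^k\|_F^2+\xi_U\sum_{k=0}^{N-1}\|\mathbf U^{k+1}-\mathbf U^k\|_F^2
\le \Psi^0-\Psi^N+\sum_{k=0}^{N-1}\gamma_k .
\end{equation*}
Since $\{\gamma_k\}$ is summable, the last sum is bounded by $\Gamma:=\sum_{k\ge0}\gamma_k<\infty$ uniformly in $N$. The quantity $\Psi^0$ is finite because all initial iterates are finite and, with $\mathbf V^{-1}=\mathbf V^0$ and $\mathbf U^{-1}=\mathbf U^0$, the extrapolation terms vanish.

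The one step needing care is a uniform lower bound $\Psi^N\ge \underline{\Psi}>-\infty$. I will obtain it in two steps.
\begin{itemize}
\item[(a)] The terms $f$, $\lambda_L\|\mathbf V\|_*$, $\lambda_S\|\mathbf W\mathbf U\|_1$, the two augmented quadratic penalties and the Lyapunov terms $\frac{\delta_V}{2}\|\cdot\|_F^2$, $\frac{\delta_U}{2}\|\cdot\|_F^2$ are all nonnegative.
\item[(b)] The only negative contributions are $-\frac{\rho_L}{2}\|\mathbf D_L^N\|_F^2-\frac{\rho_S}{2}\|\mathbf D_S^N\|_F^2$. These are bounded below because the dual sequences are bounded by Assumption~\ref{ass2:Abounded}.
\end{itemize}
Hence $\Psi^N\ge -\frac{\rho_L}{2}\sup_k\|\mathbf D_L^k\|_F^2-\frac{\rho_S}{2}\sup_k\|\mathbf D_S^k\|_F^2=:\underline{\Psi}$. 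No continuity of the DIP generators is needed here, since only the sign of each term matters.

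Combining the two steps, the partial sums are bounded by $(\Psi^0-\underline{\Psi}+\Gamma)/\min\{\xi_V,\xi_U\}$ independently of $N$. Letting $N\to\infty$ gives the square-summability \eqref{eq:square-summable}. The general term of a convergent series of nonnegative numbers tends to zero, which yields \eqref{eq:vanishing-VU}.
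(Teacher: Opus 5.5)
Your proof is correct and follows the same route as the paper: telescope the sufficient-descent inequality of Lemma~\ref{lem:descent}, use the summability of $\{\gamma_k\}$ together with a lower bound on $\Psi^k$, and let $N\to\infty$. The paper only asserts that $\Psi^k$ is bounded below, whereas you justify it: every term of $\Psi^k$ is nonnegative except $-\frac{\rho_L}{2}\|\mathbf D_L^k\|_F^2-\frac{\rho_S}{2}\|\mathbf D_S^k\|_F^2$, and these are bounded by Assumption~\ref{ass2:Abounded}, so your version fills in a step the paper leaves implicit.
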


\begin{proof}
From Lemma~\ref{lem:descent}, we have
\[
\Psi^{k+1}-\Psi^k
\le
-\xi_V\|\mathbf V^{k+1}-\mathbf V^k\|_F^2
-\xi_U\|\mathbf U^{k+1}-\mathbf U^k\|_F^2
+\gamma_k.
\]
Summing both sides from $k=0$ to $N$, we obtain
\begin{align*}
\Psi^{N+1}-\Psi^0
\le
& -\xi_V \sum_{k=0}^{N}\|\mathbf V^{k+1}-\mathbf V^k\|_F^2 \\
& -\xi_U \sum_{k=0}^{N}\|\mathbf U^{k+1}-\mathbf U^k\|_F^2
+ \sum_{k=0}^{N}\gamma_k.
\end{align*}
Since $\Psi^k$ is bounded from below and $\sum_{k=0}^\infty \gamma_k<\infty$, letting $N\to\infty$ yields \eqref{eq:square-summable}. The limits in \eqref{eq:vanishing-VU} follow immediately.
\end{proof}

We next show that the limiting subgradient of the Lyapunov function can be controlled by successive differences of the iterates.

\begin{lemma}\label{lem:subgradient}
Suppose Assumptions~\ref{ass2:Abounded}, \ref{ass3:inexact}, and \ref{ass:param} hold. Denote $\textbf{L}^{k}:=\mathbf L(\bm{\theta}_L^{k},\textbf{z}_L)$ and $\textbf{S}^k:=\mathbf S(\bm{\theta}_S^{k},\textbf{z}_S)$, then there exist a constant $C>0$ and a summable sequence $\{\gamma_k\}$ such that
\begin{equation}
\begin{aligned}
\mathrm{dist}(0,\partial \Psi^{k+1})
\le\;&
C\Big(
\|\mathbf V^{k+1}-\mathbf V^k\|_F
+\|\mathbf U^{k+1}-\mathbf U^k\|_F
+\|\mathbf U^k-\mathbf U^{k-1}\|_F
\\&
+\|\mathbf L^{k+1}-\mathbf L^k\|_F
+\|\mathbf S^{k+1}-\mathbf S^k\|_F
+\|\mathbf D_L^{k+1}-\mathbf D_L^k\|_F
+\|\mathbf D_S^{k+1}-\mathbf D_S^k\|_F
\Big)
+\gamma_k.
\end{aligned}
\label{eq:subgrad}
\end{equation}
\end{lemma}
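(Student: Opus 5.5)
The plan is to compute the limiting subdifferential of $\Psi$ block by block at the point $(\mathbf V^{k+1},\mathbf U^{k+1},\bm{\theta}_L^{k+1},\bm{\theta}_S^{k+1},\mathbf D_L^{k+1},\mathbf D_S^{k+1})$, together with the lagged variables $(\mathbf V^{k},\mathbf U^{k})$ that enter through the extrapolation terms. We then exhibit one explicit element of $\partial\Psi^{k+1}$ whose norm is bounded by the right-hand side of \eqref{eq:subgrad}. Since $f$ and the quadratic penalty/dual terms are $C^1$, and $g,h$ are separable in $\mathbf V$ and $\mathbf U$, the subdifferential splits as a product. For $\mathbf V$ it is $\nabla_{\mathbf V}f+\partial g+\rho_L(\mathbf V-\mathbf L+\mathbf D_L)+\delta_V(\mathbf V-\mathbf V^{-})$, and analogously for $\mathbf U$. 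For the lagged blocks it is $-\delta_V(\mathbf V-\mathbf V^{-})$ and $-\delta_U(\mathbf U-\mathbf U^{-})$. For the dual blocks it is $\rho_L(\mathbf V-\mathbf L)$ and $\rho_S(\mathbf U-\mathbf S)$. For the network parameters it is $-\rho_L J_{\mathbf L}(\bm{\theta}_L)^{*}(\mathbf V-\mathbf L+\mathbf D_L)$, and similarly for $\bm{\theta}_S$, where $J$ denotes the Jacobian of the generator.

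For the $\mathbf V$ block, use the first-order optimality condition of Step 6. It gives $0\in\nabla_{\mathbf V}f(\mathbf V^{k+1},\bar{\mathbf U}^k)+\partial g(\mathbf V^{k+1})+\rho_L(\mathbf V^{k+1}-\mathbf L^k+\mathbf D_L^k)$. Picking the subgradient from $\partial g(\mathbf V^{k+1})$ supplied by this condition, the corresponding element of $\partial_{\mathbf V}\Psi^{k+1}$ equals
\[
\nabla_{\mathbf V}f(\mathbf V^{k+1},\mathbf U^{k+1})-\nabla_{\mathbf V}f(\mathbf V^{k+1},\bar{\mathbf U}^k)+\rho_L(\mathbf L^k-\mathbf L^{k+1}+\mathbf D_L^{k+1}-\mathbf D_L^k)+\delta_V(\mathbf V^{k+1}-\mathbf V^k).
\]
By Lipschitz continuity of $\nabla f$, the first difference is at most $L_f(\|\mathbf U^{k+1}-\mathbf U^k\|_F+\beta\|\mathbf U^k-\mathbf U^{k-1}\|_F)$, so every term appears in \eqref{eq:subgrad}. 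The $\mathbf U$ block is handled the same way from Step 8, with $\bar{\mathbf V}^{k+1}-\mathbf V^{k+1}=\alpha(\mathbf V^{k+1}-\mathbf V^k)$. The dual blocks reduce exactly to $\rho_L(\mathbf D_L^{k+1}-\mathbf D_L^k)$ and $\rho_S(\mathbf D_S^{k+1}-\mathbf D_S^k)$ by Step 11. The lagged blocks are trivially bounded by $\delta_V\|\mathbf V^{k+1}-\mathbf V^k\|_F$ and $\delta_U\|\mathbf U^{k+1}-\mathbf U^k\|_F$.

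The main obstacle is the $\bm{\theta}$ blocks, because the DIP subproblems in Steps 9 and 10 are solved only inexactly. The plan is to strengthen the reading of Assumption~\ref{ass3:inexact} slightly into an approximate stationarity condition for the inner solver. Concretely, assume $\|J_{\mathbf L}(\bm{\theta}_L^{k+1})^{*}(\mathbf L^{k+1}-\mathbf V^{k+1}-\mathbf D_L^k)\|\le\varepsilon_k^L/\rho_L$, and likewise for $\mathbf S$; this is the natural output criterion of a gradient-based inner loop. With this, the $\bm{\theta}_L$ component of $\partial\Psi^{k+1}$ is
\[
-\rho_L J^{*}(\mathbf V^{k+1}-\mathbf L^{k+1}+\mathbf D_L^{k+1})=-\rho_L J^{*}(\mathbf V^{k+1}-\mathbf L^{k+1}+\mathbf D_L^{k})-\rho_L J^{*}(\mathbf D_L^{k+1}-\mathbf D_L^k).
\]
Its norm is at most $\varepsilon_k^L+\rho_L\|J\|\,\|\mathbf D_L^{k+1}-\mathbf D_L^k\|_F$. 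Here $\|J\|$ is uniformly bounded because $\bm{\theta}_L^k$ stays in a bounded set (Assumption~\ref{ass2:Abounded}) and the generator is $C^1$. This is exactly where the summable term $\gamma_k=\varepsilon_k^L+\varepsilon_k^S$ enters.

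Finally, collect all block bounds via the triangle inequality and $\mathrm{dist}(0,\partial\Psi^{k+1})\le\|\text{chosen element}\|$. Take $C$ as the maximum of $L_f,\ \beta L_f,\ \alpha L_f,\ \rho_L,\ \rho_S,\ \delta_V,\ \delta_U,\ \rho_L\sup\|J_{\mathbf L}\|,\ \rho_S\sup\|J_{\mathbf S}\|$ (up to a factor counting blocks). This yields \eqref{eq:subgrad}.
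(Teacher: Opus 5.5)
Your proposal is correct, given the strengthened inexactness condition you state explicitly, and it is essentially the paper's argument. The $\mathbf V$- and $\mathbf U$-blocks come from the optimality conditions of Steps~6 and~8 plus the Lipschitz bound on $\nabla f$. The dual blocks come from $\mathbf D_L^{k+1}-\mathbf D_L^k=\mathbf V^{k+1}-\mathbf L^{k+1}$ and its $\mathbf S$ analogue. The network blocks come from an approximate-stationarity bound on the inexact DIP solves, which is where $\gamma_k=\varepsilon_k^L+\varepsilon_k^S$ enters.

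The paper's proof also relies on that stationarity bound, but it claims the bound follows from Assumption~\ref{ass3:inexact}. It does not: an inner solver that leaves $\bm{\theta}$ unchanged satisfies that function-value condition with $\varepsilon_k=0$. The paper also treats $\partial_{\bm{\theta}_L}\Psi^{k+1}$ as if it involved $\mathbf D_L^{k}$ rather than $\mathbf D_L^{k+1}$, and it omits the lagged blocks. Your explicit strengthening and the extra $\rho_L\sup\|J_{\mathbf L}\|\,\|\mathbf D_L^{k+1}-\mathbf D_L^k\|_F$ term, which relies on your $C^1$-generator hypothesis, fix these points. So do the $-\delta_V(\mathbf V^{k+1}-\mathbf V^k)$ and $-\delta_U(\mathbf U^{k+1}-\mathbf U^k)$ components. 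These are refinements that make the same proof rigorous, not a different route.
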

\begin{proof}
We estimate each block of the limiting subgradient of $\Psi^{k+1}$. 
Recall that
\begin{equation}
\Psi^{k+1}=
\mathcal L^{k+1}
+\frac{\delta_V}{2}\|\mathbf V^{k+1}-\mathbf V^k\|_F^2
+\frac{\delta_U}{2}\|\mathbf U^{k+1}-\mathbf U^k\|_F^2.
\end{equation}
First, by the optimality condition of the $\mathbf V$-subproblem, there exists
$\mathbf G_V^{k+1}\in \partial \|\mathbf V^{k+1}\|_*$
such that
\begin{equation}
\nabla_{\mathbf V} f(\mathbf V^{k+1},\bar{\mathbf U}^k)
+\lambda_L \mathbf G_V^{k+1}
+\rho_L\bigl(\mathbf V^{k+1}-\mathbf L^k+\mathbf D_L^k\bigr)=0.
\label{eq:opt-V}
\end{equation}
On the other hand, the $\mathbf V$-component of the limiting subgradient of $\Psi^{k+1}$ is
\begin{equation}
\partial_{\mathbf V}\Psi^{k+1}
=
\nabla_{\mathbf V} f(\mathbf V^{k+1},\mathbf U^{k+1})
+\lambda_L \partial \|\mathbf V^{k+1}\|_*
+\rho_L\bigl(\mathbf V^{k+1}-\mathbf L^{k+1}+\mathbf D_L^{k+1}\bigr)
+\delta_V(\mathbf V^{k+1}-\mathbf V^k).
\label{eq:subgrad-V}
\end{equation}
Using \eqref{eq:opt-V} and \eqref{eq:subgrad-V}, define
\begin{equation}
\begin{aligned}
\mathbf q_V^{k+1}
:=
&\;\nabla_{\mathbf V} f(\mathbf V^{k+1},\mathbf U^{k+1})
-\nabla_{\mathbf V} f(\mathbf V^{k+1},\bar{\mathbf U}^k)
\\
&+\rho_L\Big[
(\mathbf V^{k+1}-\mathbf L^{k+1}+\mathbf D_L^{k+1})
-(\mathbf V^{k+1}-\mathbf L^k+\mathbf D_L^k)
\Big]
+\delta_V(\mathbf V^{k+1}-\mathbf V^k).
\end{aligned}
\label{eq:qV-def}
\end{equation}
Then, we have
$\mathbf q_V^{k+1}\in \partial_{\mathbf V}\Psi^{k+1}$.
We now bound its norm and get
\begin{equation}
\begin{aligned}
\|\mathbf q_V^{k+1}\|_F
\le\;&L_f\|\mathbf U^{k+1}-\bar{\mathbf U}^k\|_F
+\rho_L\Vert (\mathbf L^k-\mathbf L^{k+1})+(\mathbf D_L^{k+1}-\mathbf D_L^k)\Vert_F+\delta_V\|\mathbf V^{k+1}-\mathbf V^k\|_F
\\
\le\;&
L_f\|\mathbf U^{k+1}-\mathbf U^k\|_F
+\beta L_f\|\mathbf U^k-\mathbf U^{k-1}\|_F
\\
&+\rho_L\|\mathbf L^{k+1}-\mathbf L^k\|_F
+\rho_L\|\mathbf D_L^{k+1}-\mathbf D_L^k\|_F
+\delta_V\|\mathbf V^{k+1}-\mathbf V^k\|_F.
\end{aligned}
\label{eq:qV-bound}
\end{equation}

Second, by the optimality condition of the $\mathbf U$-subproblem, there exists 
$\mathbf G_U^{k+1}\in \partial \|\mathbf W\mathbf U^{k+1}\|_1$
such that
\begin{equation}
\nabla_{\mathbf U} f(\bar{\mathbf V}^{k+1},\mathbf U^{k+1})
+\lambda_S \mathbf G_U^{k+1}
+\rho_S\bigl(\mathbf U^{k+1}-\mathbf S^k+\mathbf D_S^k\bigr)=0.
\label{eq:opt-U}
\end{equation}
The $\mathbf U$-component of $\partial \Psi^{k+1}$ is
\begin{equation}
\partial_{\mathbf U}\Psi^{k+1}
=
\nabla_{\mathbf U} f(\mathbf V^{k+1},\mathbf U^{k+1})
+\lambda_S \partial \|\mathbf W\mathbf U^{k+1}\|_1
+\rho_S\bigl(\mathbf U^{k+1}-\mathbf S^{k+1}+\mathbf D_S^{k+1}\bigr)
+\delta_U(\mathbf U^{k+1}-\mathbf U^k).
\label{eq:subgrad-U}
\end{equation}
Using \eqref{eq:opt-U} and \eqref{eq:subgrad-U}, define
\begin{equation}
\begin{aligned}
\mathbf q_U^{k+1}
:=
&\;\nabla_{\mathbf U} f(\mathbf V^{k+1},\mathbf U^{k+1})
-\nabla_{\mathbf U} f(\bar{\mathbf V}^{k+1},\mathbf U^{k+1})
\\
&+\rho_S\Big[
(\mathbf U^{k+1}-\mathbf S^{k+1}+\mathbf D_S^{k+1})
-(\mathbf U^{k+1}-\mathbf S^k+\mathbf D_S^k)
\Big]
+\delta_U(\mathbf U^{k+1}-\mathbf U^k).
\end{aligned}
\label{eq:qU-def}
\end{equation}
Then we know 
$\mathbf q_U^{k+1}\in \partial_{\mathbf U}\Psi^{k+1}$. 
Therefore,
\begin{equation}
\begin{aligned}
\|\mathbf q_U^{k+1}\|_F
\leq\;& L_f\|\mathbf V^{k+1}-\bar{\mathbf V}^{k+1}\|_F+\rho_S\Vert(\mathbf S^k-\mathbf S^{k+1})+(\mathbf D_S^{k+1}-\mathbf D_S^k)\Vert_F+\delta_U\|\mathbf U^{k+1}-\mathbf U^k\|_F
\\
\le\;&
\alpha L_f\|\mathbf V^{k+1}-\mathbf V^k\|_F
+\rho_S\|\mathbf S^{k+1}-\mathbf S^k\|_F
+\rho_S\|\mathbf D_S^{k+1}-\mathbf D_S^k\|_F+\delta_U\|\mathbf U^{k+1}-\mathbf U^k\|_F.
\end{aligned}
\label{eq:qU-bound}
\end{equation}
Third, from the augmented Lagrangian,
\[
\partial_{\mathbf D_L}\Psi^{k+1}
=
\rho_L(\mathbf V^{k+1}-\mathbf L^{k+1}),
\qquad
\partial_{\mathbf D_S}\Psi^{k+1}
=
\rho_S(\mathbf U^{k+1}-\mathbf S^{k+1}).
\]
Since
\[
\mathbf D_L^{k+1}-\mathbf D_L^k=\mathbf V^{k+1}-\mathbf L^{k+1},
\qquad
\mathbf D_S^{k+1}-\mathbf D_S^k=\mathbf U^{k+1}-\mathbf S^{k+1},
\]
we obtain
\begin{equation}
\|\partial_{\mathbf D_L}\Psi^{k+1}\|_F
=
\rho_L\|\mathbf D_L^{k+1}-\mathbf D_L^k\|_F,
\qquad
\|\partial_{\mathbf D_S}\Psi^{k+1}\|_F
=
\rho_S\|\mathbf D_S^{k+1}-\mathbf D_S^k\|_F.
\label{eq:dual-block-bound}
\end{equation}
Finally, from Assumption~\ref{ass3:inexact}, we know that
\begin{equation}
\mathrm{dist}\bigl(0,\partial \Phi_L^k(\bm{\theta}_L^{k+1})\bigr)\le \varepsilon_k^L,
\qquad
\mathrm{dist}\bigl(0,\partial \Phi_S^k(\bm{\theta}_S^{k+1})\bigr)\le \varepsilon_k^S.
\end{equation}
Hence there exist
\begin{equation}
\mathbf q_{\bm{\theta}_L}^{k+1}\in \partial_{\bm{\theta}_L}\Psi^{k+1},
\qquad
\mathbf q_{\bm{\theta}_S}^{k+1}\in \partial_{\bm{\theta}_S}\Psi^{k+1},
\end{equation}
such that
\begin{equation}\label{eq:net-bound}
\|\mathbf q_{\bm{\theta}_L}^{k+1}\|\le \varepsilon_k^L,
\qquad
\|\mathbf q_{\bm{\theta}_S}^{k+1}\|\le \varepsilon_k^S.
\end{equation}
Collecting \eqref{eq:qV-bound}, \eqref{eq:qU-bound}, \eqref{eq:dual-block-bound}, and \eqref{eq:net-bound} and absorbing fixed coefficients into a generic constant $C>0$, we conclude that
\eqref{eq:subgrad} hold
with $\gamma_k:=\varepsilon_k^L+\varepsilon_k^S$.
\end{proof}

\begin{theorem}\label{thm:basic}
Suppose Assumptions~\ref{ass2:Abounded}, \ref{ass3:inexact}, and \ref{ass:param} hold. 
Let the sequence

\[
\left\{(\mathbf V^k,\mathbf U^k,\bm{\theta}_L^k,\bm{\theta}_S^k,\mathbf D_L^k,\mathbf D_S^k)\right\}
\]

be generated by Algorithm~\ref{alg:eadmm}. Then:
\begin{itemize}
\item[(i)] The set of cluster points of the sequence is nonempty and compact.

\item[(ii)] The sequence $\{\Psi^k\}$ converges. 

\end{itemize}
\end{theorem}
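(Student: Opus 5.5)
Part (i) follows directly from Assumption~\ref{ass2:Abounded}. The sequence $\{(\mathbf V^k,\mathbf U^k,\bm{\theta}_L^k,\bm{\theta}_S^k,\mathbf D_L^k,\mathbf D_S^k)\}$ is bounded in a finite-dimensional space, so Bolzano--Weierstrass gives a convergent subsequence and the cluster set $\Omega$ is nonempty. To show compactness, I will write $\Omega=\bigcap_{m\ge 0}\overline{\{w^k:k\ge m\}}$, where $w^k$ denotes the full iterate. This is an intersection of closed sets, so $\Omega$ is closed. It is also contained in the closure of the bounded set of all iterates, so it is bounded, and hence compact.

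For part (ii), I will use Lemma~\ref{lem:descent} in the quasi-Fej\'er style. Since $\xi_V,\xi_U>0$, it gives $\Psi^{k+1}\le \Psi^k+\gamma_k$ with $\sum_k\gamma_k<\infty$. Define $\tilde\Psi^k:=\Psi^k-\sum_{j<k}\gamma_j$. Then $\tilde\Psi^{k+1}\le\tilde\Psi^k$, so $\{\tilde\Psi^k\}$ is nonincreasing. If it is bounded below, it converges, and since the partial sums $\sum_{j<k}\gamma_j$ converge, $\Psi^k$ converges as well.

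The remaining step, and the only real obstacle, is showing that $\{\Psi^k\}$ is bounded below. The proof of Lemma~\ref{lem:asymptotic} uses this implicitly. I plan to obtain it from boundedness rather than from coercivity.
\begin{itemize}
\item $\mathcal L^k$ is a continuous function (the networks $\mathbf L(\cdot,\mathbf z_L)$ and $\mathbf S(\cdot,\mathbf z_S)$ are continuous in their parameters) evaluated on the bounded iterates.
\item The proximal terms $\frac{\delta_V}{2}\|\mathbf V^k-\mathbf V^{k-1}\|_F^2$ and $\frac{\delta_U}{2}\|\mathbf U^k-\mathbf U^{k-1}\|_F^2$ are nonnegative.
\item By Assumption~\ref{ass2:Abounded}, all iterates lie in a compact set $K$. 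The continuous part of $\mathcal L$ is bounded on $K$, the terms $\lambda_L\|\mathbf V\|_*$ and $\lambda_S\|\mathbf W\mathbf U\|_1$ are nonnegative, and the terms $-\frac{\rho}{2}\|\mathbf D\|_F^2$ are bounded because $\mathbf D_L^k,\mathbf D_S^k$ are bounded.
\end{itemize}
Together these give $\inf_k\Psi^k>-\infty$.

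As an alternative, one can lower-bound $\mathcal L^k$ directly. Using the identity $\frac{\rho}{2}\|a+\mathbf D\|^2-\frac{\rho}{2}\|\mathbf D\|^2\ge -\frac{\rho}{2}\|\mathbf D\|^2$ together with boundedness of the $\mathbf D$'s gives $\mathcal L^k\ge -\frac{\rho_L}{2}\sup_k\|\mathbf D_L^k\|_F^2-\frac{\rho_S}{2}\sup_k\|\mathbf D_S^k\|_F^2$, which is finite. I would present this elementary bound, since it avoids any continuity assumption on the network.

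Once $\Psi^k$ is bounded below, the monotone sequence $\tilde\Psi^k$ converges, so $\lim_k\Psi^k=\lim_k\tilde\Psi^k+\sum_{j\ge0}\gamma_j$ exists and is finite. This proves (ii).
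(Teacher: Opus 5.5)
Your proof is correct and follows essentially the same route as the paper: Bolzano--Weierstrass for (i), and for (ii) a shifted monotone sequence, where your $\tilde\Psi^k=\Psi^k-\sum_{j<k}\gamma_j$ differs from the paper's $\widehat\Psi^k=\Psi^k+\sum_{j\ge k}\gamma_j$ only by the constant $\sum_{j\ge 0}\gamma_j$. Your elementary bound $\Psi^k\ge\mathcal L^k\ge-\frac{\rho_L}{2}\sup_k\|\mathbf D_L^k\|_F^2-\frac{\rho_S}{2}\sup_k\|\mathbf D_S^k\|_F^2$ usefully justifies the lower-boundedness of $\{\Psi^k\}$, which the paper asserts without proof.
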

\begin{proof}
(i) By Assumption~\ref{ass2:Abounded}, the sequence
$\{(\mathbf V^k,\mathbf U^k,\bm{\theta}_L^k,\bm{\theta}_S^k,\mathbf D_L^k,\mathbf D_S^k)\}_{k\ge0}$
is bounded. Hence, by the Bolzano--Weierstrass theorem, it admits at least one cluster point. Therefore, the cluster point set is nonempty. Since the cluster point set of a bounded sequence is closed and bounded, it is compact.

(ii) From Lemma~\ref{lem:descent}, we have
\begin{equation}
\Psi^{k+1}-\Psi^k
\le
-\xi_V\|\mathbf V^{k+1}-\mathbf V^k\|_F^2
-\xi_U\|\mathbf U^{k+1}-\mathbf U^k\|_F^2
+\gamma_k,
\end{equation}
where $\sum_{k=0}^{\infty}\gamma_k<\infty$.
Define
\begin{equation}
\widehat\Psi^k
:=
\Psi^k+\sum_{j=k}^{\infty}\gamma_j.
\end{equation}
Then
\begin{equation}
\widehat\Psi^{k+1}
=
\Psi^{k+1}+\sum_{j=k+1}^{\infty}\gamma_j
\le
\Psi^k+\sum_{j=k}^{\infty}\gamma_j
=
\widehat\Psi^k.
\end{equation}
Hence, $\{\widehat\Psi^k\}$ is nonincreasing. Since $\Psi^k$ is bounded from below and $\sum_{j=k}^{\infty}\gamma_j\ge0$, the sequence $\{\widehat\Psi^k\}$ is bounded from below and therefore converges. Because
\begin{equation}
\sum_{j=k}^{\infty}\gamma_j\to0,
\end{equation}
we conclude that $\{\Psi^k\}$ also converges.

\end{proof}

\begin{remark}\label{ass:auxiliary}
The estimate in Lemma~\ref{lem:subgradient} involves auxiliary increments
\[
\|\mathbf L^{k+1}-\mathbf L^k\|_F,\quad
\|\mathbf S^{k+1}-\mathbf S^k\|_F,\quad
\|\mathbf D_L^{k+1}-\mathbf D_L^k\|_F,\quad
\|\mathbf D_S^{k+1}-\mathbf D_S^k\|_F.
\]
Controlling or vanishing of these quantities is sufficient to ensure that the limiting subgradient tends to zero.
\end{remark}

\begin{corollary}\label{cor:critical}
If, in addition, the auxiliary increments in Remark~\ref{ass:auxiliary} vanish asymptotically, then every cluster point of the generated sequence is a critical point of $\Psi$.
\end{corollary}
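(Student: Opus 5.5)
The plan is a standard closedness-of-the-subdifferential argument, combining Lemma~\ref{lem:asymptotic}, Lemma~\ref{lem:subgradient} and Theorem~\ref{thm:basic}. Let $\mathbf w^\star=(\mathbf V^\star,\mathbf U^\star,\bm{\theta}_L^\star,\bm{\theta}_S^\star,\mathbf D_L^\star,\mathbf D_S^\star)$ be a cluster point. Such a point exists by Theorem~\ref{thm:basic}(i). Take a subsequence $\mathbf w^{k_j+1}\to\mathbf w^\star$. We regard $\Psi$ as a function of the augmented variable $(\mathbf w,\mathbf V',\mathbf U')$, where $\mathbf V',\mathbf U'$ stand for the previous iterates $\mathbf V^{k},\mathbf U^{k}$. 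By Lemma~\ref{lem:asymptotic}, $\mathbf V^{k_j}\to\mathbf V^\star$ and $\mathbf U^{k_j}\to\mathbf U^\star$ along the same subsequence. Hence the augmented iterate converges to $(\mathbf w^\star,\mathbf V^\star,\mathbf U^\star)$, and the extrapolation penalty terms vanish in the limit.

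\emph{Step 1: the subgradient vanishes.} In \eqref{eq:subgrad}, the terms $\|\mathbf V^{k+1}-\mathbf V^k\|_F$, $\|\mathbf U^{k+1}-\mathbf U^k\|_F$ and $\|\mathbf U^k-\mathbf U^{k-1}\|_F$ tend to zero by Lemma~\ref{lem:asymptotic}. The four auxiliary increments tend to zero by the additional hypothesis of the corollary. Finally $\gamma_k\to0$ because it is summable. Hence there are $\mathbf q^{k_j+1}\in\partial\Psi^{k_j+1}$ with $\mathbf q^{k_j+1}\to0$.

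\emph{Step 2: the function values converge.} We show $\Psi^{k_j+1}\to\Psi(\mathbf w^\star)$, so that the closedness of the limiting subdifferential applies. The smooth parts of $\Psi$ are continuous: $f$, the quadratic penalties, and the DIP outputs $\mathbf L(\bm\theta_L,\mathbf z_L)$, $\mathbf S(\bm\theta_S,\mathbf z_S)$. For the networks this requires the generators to be continuous in their parameters, which I will take as implicit in the gradient-based treatment. Of the nonsmooth parts, $\lambda_L\|\cdot\|_*$ and $\lambda_S\|\mathbf W\cdot\|_1$ are finite convex functions, hence continuous. So $\Psi$ is in fact continuous at $\mathbf w^\star$, and $\Psi^{k_j+1}\to\Psi(\mathbf w^\star)$ holds immediately. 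This is consistent with Theorem~\ref{thm:basic}(ii).

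\emph{Step 3: conclude.} The limiting subdifferential is closed in the following sense. If $\mathbf w^j\to\mathbf w^\star$, $\Psi(\mathbf w^j)\to\Psi(\mathbf w^\star)$, $\mathbf q^j\in\partial\Psi(\mathbf w^j)$ and $\mathbf q^j\to0$, then $0\in\partial\Psi(\mathbf w^\star)$. Therefore $\mathbf w^\star$ is a critical point of $\Psi$.

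\emph{Main obstacle.} The hardest point is the network blocks. Lemma~\ref{lem:subgradient} asserts the existence of $\mathbf q_{\bm\theta}\in\partial_{\bm\theta}\Psi^{k+1}$ of size at most $\varepsilon_k$. However, the inexact condition in Assumption~\ref{ass3:inexact} only measures criticality of the subproblem $\Phi_L^k$, which is built with $\mathbf D_L^k$ rather than $\mathbf D_L^{k+1}$. So I would first check that the true $\bm\theta$-gradient of $\Psi^{k+1}$ differs from $\nabla\Phi_L^k(\bm\theta_L^{k+1})$ by at most a Jacobian-bounded multiple of $\|\mathbf D_L^{k+1}-\mathbf D_L^k\|_F$. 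The Jacobian bound holds on bounded sets by Assumption~\ref{ass2:Abounded} and smoothness of $\mathcal H$. This discrepancy also vanishes under the corollary's hypothesis, so Step 1 survives. With that patch, the rest of the argument is routine.
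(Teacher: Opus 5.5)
Your proposal is correct and follows the paper's route. Lemma~\ref{lem:subgradient}, combined with Lemma~\ref{lem:asymptotic}, the extra hypothesis and $\gamma_k\to0$, gives $\mathbf q^{k_j+1}\in\partial\Psi^{k_j+1}$ tending to zero, and closedness of the limiting subdifferential then yields criticality; your Step~2 (continuity of $\Psi$, with the previous iterates carried as extra arguments) is in fact what justifies $\Psi^{k_j+1}\to\Psi(\mathbf w^\star)$, which the paper attributes only to the convergence of $\{\Psi^k\}$. Your $\mathbf D_L^k$-versus-$\mathbf D_L^{k+1}$ correction is also right, but the approximate stationarity of $\Phi_L^k$ it starts from is not implied by Assumption~\ref{ass3:inexact}, which is only a function-value descent condition, so either use the statement of Lemma~\ref{lem:subgradient} directly (then no patch is needed) or add that stationarity as an explicit hypothesis.
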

\begin{proof}
From Lemma~\ref{lem:subgradient} and the auxiliary increments vanish asymptotically by assumption, and $\gamma_k\to 0$, we have
$
\mathrm{dist}(0,\partial \Psi^{k}) \to 0.
$
Let $\mathbf B^k := (\mathbf V^k,\mathbf U^k,\bm{\theta}_L^k,\bm{\theta}_S^k,\mathbf D_L^k,\mathbf D_S^k)$ denote the full iterate.
Let $\mathbf B^{k_j}\to \mathbf B^*$ be any convergent subsequence. Since $\Psi^k$ converges, we also have $\Psi(\mathbf B^{k_j}) \to \Psi(\mathbf B^*)$. 
By the closedness of the limiting subdifferential, we obtain $
0 \in \partial \Psi(\mathbf B^*)$.
Thus $\mathbf B^*$ is a critical point of $\Psi$.
\end{proof}
\begin{remark}
If the Lyapunov function $\Psi$ satisfies the Kurdyka--\L ojasiewicz (KL) property, and additional regularity conditions ensure sufficient control of auxiliary increments, then the standard KL framework can be applied to establish convergence of the sequence.
\end{remark}
\section{Limitations}\label{appen: limitations}

One limitation of the proposed method is its computational cost. Similar to other DIP-based approaches, the method requires scan-specific test-time optimization, resulting in longer reconstruction times compared with the inference time for supervised learning methods that require only a single or a few forward passes. In addition, compared with classical optimization-based approaches such as compressed sensing, the proposed L+S-DIP framework introduces additional hyperparameters that are related to network architecture, and network parameters optimization, which may require careful tuning for stable performance.
\section{Impact of the choice of the regularization parameters}\label{appen: impact of choice of reg params}
We study the sensitivity of the proposed method to the choice of the two regularization parameters $\lambda_L$ and $\lambda_S$, which control the strengths of the low-rank and sparse priors, respectively. Table~\ref{tab:lambda_grid} reports the PSNR results obtained with different combinations of $\lambda_L$ and $\lambda_S$.
The evaluation is conducted on the OCMR dataset with $4\times$ acceleration, using two slices.
From the results, we observe that our method performs well under most parameter settings. However, the setting $\lambda_L = 0.2, \lambda_S = 2\times10^{-4}$ returns the highest PSNR and therefore, we use it for all experiments..

\begin{table}[htp!]
\centering
\caption{
PSNR sensitivity to $\lambda_L$ and $\lambda_S$.
The best result is highlighted in bold.
}
\label{tab:lambda_grid}
\setlength{\tabcolsep}{5.5pt}
\renewcommand{\arraystretch}{1.12}
\resizebox{0.48\textwidth}{!}{
\begin{tabular}{c|cccc}
\toprule
\diagbox[width=6.2em,height=2.em]{$\lambda_S$}{$\lambda_L$}
& $0.05$ & $0.10$ & $0.20$ & $0.50$ \\
\midrule
$1\times 10^{-4}$ & 32.67 & 32.96 & 32.49 & 32.96 \\
$2\times 10^{-4}$ & 32.39 & 32.53 &  \textbf{33.83} & 33.69 \\
$5\times 10^{-4}$ & 32.54 & 32.51 & 32.19 & 32.17 \\
$1\times 10^{-3}$ & 32.55 & 32.15 & 32.57 & 31.55 \\
\bottomrule
\end{tabular}
}
\end{table}

\section{Uncertainty maps}\label{appen: uncertainty maps}


We further visualize reconstruction variability by running our method with different random network initializations.
Specifically, we use four random seeds and compute the mean and standard deviation of the reconstructed images across these runs.
Fig.~\ref{fig:uncertainty_maps} shows the ground truth, the $4\times$ undersampled/aliased reconstruction input, the mean reconstruction over four random seeds, and the corresponding standard deviation map.
The standard deviation map shows relatively small variation across random initializations, suggesting that the proposed method produces consistent reconstructions for this example.
However, some structured residual artifacts may still be visible in the reconstruction.

\begin{figure}[t]
    \centering
    \includegraphics[width=0.65\linewidth]{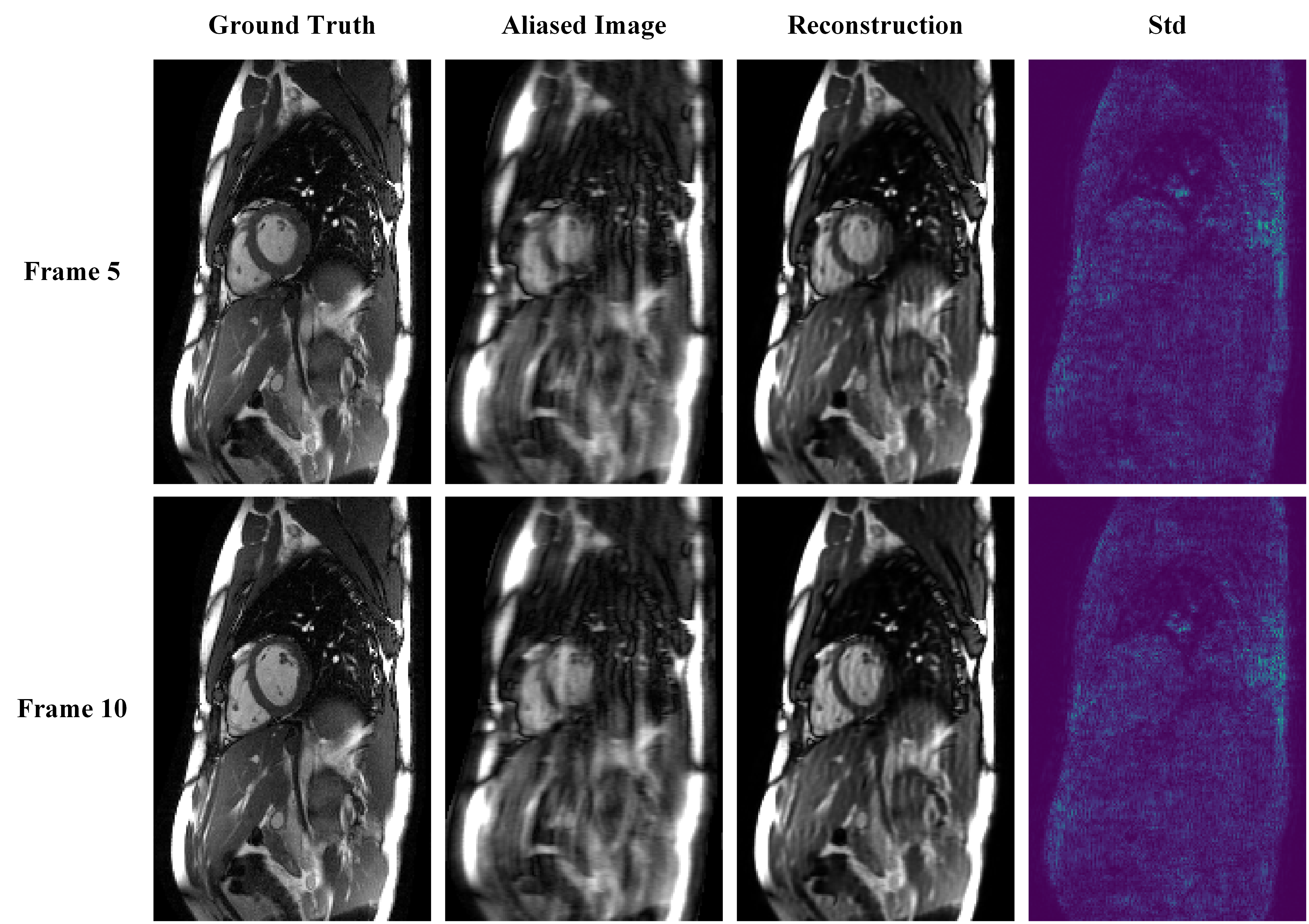}
\caption{
Effect of random network initialization on the proposed L+S-DIP reconstruction.
From left to right, the columns show the ground truth, the $4\times$ undersampled/aliased image, the mean reconstruction over four random seeds, and the corresponding standard deviation map.
The relatively low standard deviation indicates stable reconstructions across different initializations for this example.
}
    \label{fig:uncertainty_maps}
\end{figure}


\section{{Visualization of the Learned Low-Rank and Sparse Components}}\label{appen: low rank and sparse visuals}


To better understand how the proposed decomposition separates the learned low-rank and sparse components, we visualize both of these components separately in Figure~\ref{fig:low_sparse_visual}. The figure shows two temporal frames from the same cine slice, with columns corresponding to the ground truth, the 4$\times$ undersampled image, the L+S DIP reconstructed image, and the learned low-rank and sparse component. The reconstructed images closely recover the main cardiac anatomy compared with the aliased inputs, indicating that the proposed decomposition is effective for removing undersampling artifacts.
The learned low-rank component contains the dominant anatomical structures that are largely shared across time, such as the surrounding background anatomy, while the sparse component has lower overall intensity and emphasizes frame-dependent variations and localized dynamic details. Thus, the low rank part $\mathbf{L}$ captures the temporally correlated background, while $\mathbf{S}$ accounts for dynamic innovations that are not well represented by the low-rank component alone.

\begin{figure*}[t]
    \centering
    \includegraphics[width=0.8\textwidth]{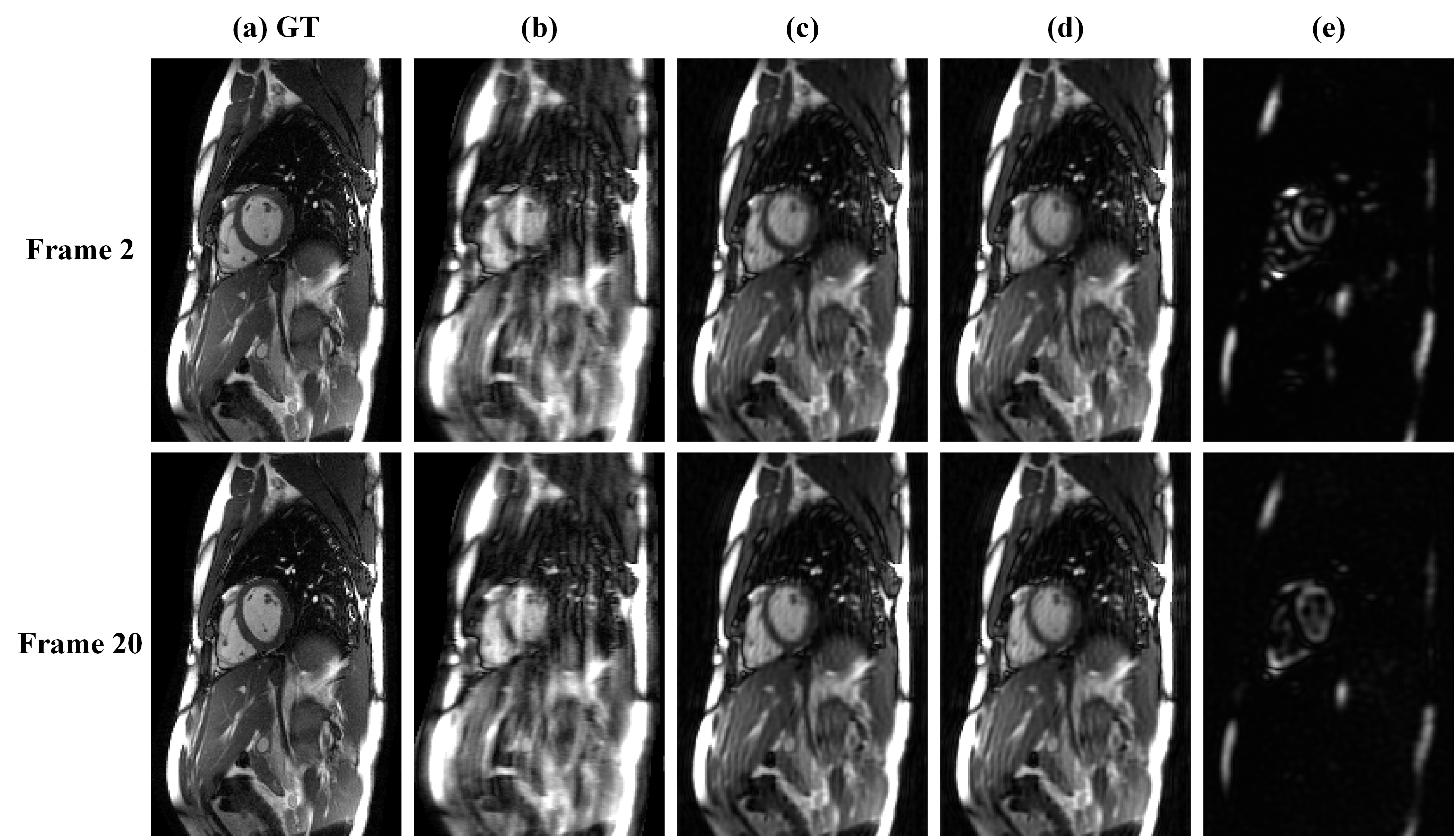}
    \caption{
    Visualization of the learned low-rank and sparse components for the OCMR dataset.
    Columns show (a) ground truth (GT), (b) $4 \times$ undersampled aliased image, (c) reconstructed image for L+S DIP, (d) learned low-rank component, and (e) learned sparse component.
    The two rows show two different frames from the same slice.
    }
    \label{fig:low_sparse_visual}
\end{figure*}